  \providecommand\BibTeX{{%
    \normalfont B\kern-0.5em{\scshape i\kern-0.25em b}\kern-0.8em\TeX}}}
\newcommand{\name}{Rabia}
\newtheorem{theorem}{Theorem}
\newtheorem{lemma}{Lemma}
\begin{document}

\title{Rabia: Simplifying State-Machine Replication Through Randomization}

\author{Haochen Pan}
\authornote{Work was done at Boston College.}
\affiliation{%
  \institution{University of Chicago}
  \city{Chicago}
  \state{IL}
  \country{USA}}
\email{haochenpan@uchicago.edu}

\author{Jesse Tuglu}
\authornotemark[1]
\affiliation{%
  \institution{University of Michigan}
  \city{Ann Arbor}
  \state{MI}
  \country{USA}}
\email{tuglu@umich.edu}

\author{Neo Zhou}
\affiliation{%
  \institution{Boston College}
  \city{Boston}
  \state{MA}
  \country{USA}}
\email{neo.zhou@bc.edu}

\author{Tianshu Wang}
\authornotemark[1]
\affiliation{%
 \institution{Duke University}
 \city{Durham}
 \state{NC}
 \country{USA}}
\email{tw295@duke.edu}

\author{Yicheng Shen}
\affiliation{%
  \institution{Boston College}
  \city{Boston}
  \state{MA}
  \country{USA}}
\email{yicheng.shen@bc.edu}

\author{Xiong Zheng}
\authornote{Work was done at UTexas Austin.}
\affiliation{%
  \institution{Google, Inc.}
  \city{Kirkland}
  \state{WA}
  \country{USA}}
\email{xiongzheng@google.com}

\author{Joseph Tassarotti}
\affiliation{%
  \institution{Boston College}
  \city{Boston}
  \state{MA}
  \country{USA}}
\email{tassarot@bc.edu}

\author{Lewis Tseng}
\affiliation{%
  \institution{Boston College}
  \city{Boston}
  \state{MA}
  \country{USA}}
\email{lewis.tseng@bc.edu}

\author{Roberto Palmieri}
\affiliation{%
  \institution{Lehigh University}
  \city{Bethlehem}
  \state{PA}
  \country{USA}}
\email{palmieri@lehigh.edu}

\renewcommand{\shortauthors}{Pan et al.}

\begin{abstract}
We introduce \name{}, a simple and high performance framework for implementing state-machine replication (SMR) within a datacenter. The main innovation of \name{} is in using \textit{randomization} to simplify the design.  \name{} provides the following two features: (i) It does not need any fail-over protocol and supports trivial auxiliary protocols like log compaction, snapshotting, and reconfiguration, components that are often considered the most challenging when developing SMR systems; and (ii) It provides high performance, up to 1.5x higher throughput than the closest competitor (i.e., EPaxos) in a favorable setup (same availability zone with three replicas) and is comparable with a larger number of replicas or when deployed in multiple availability zones.
\end{abstract}


\begin{CCSXML}
<ccs2012>
   <concept>
       <concept_id>10010520.10010575</concept_id>
       <concept_desc>Computer systems organization~Dependable and fault-tolerant systems and networks</concept_desc>
       <concept_significance>500</concept_significance>
       </concept>
   <concept>
       <concept_id>10010147.10010919.10010172</concept_id>
       <concept_desc>Computing methodologies~Distributed algorithms</concept_desc>
       <concept_significance>500</concept_significance>
       </concept>
 </ccs2012>
\end{CCSXML}

\ccsdesc[500]{Computer systems organization~Dependable and fault-tolerant systems and networks}
\ccsdesc[500]{Computing methodologies~Distributed algorithms}

\keywords{SMR, Consensus, Formal Verification}

\maketitle

\section{Introduction}
\label{s:intro}

State-Machine Replication (SMR) uses replication to ensure that a service is available and consistent in the presence of failures. One popular mechanism for implementing SMR is to use a consensus algorithm to agree on the total order of client requests (or commands), namely, a log-based SMR approach~\cite{Raft_ATC14,SMR_Schneider_ACM90,Mu_Aguilera_OSDI20}. Paxos and variants \cite{lamport1998part,lamport2001paxos} had mostly been the de facto choice for implementing SMR, e.g., Chubby \cite{Chubby_OSDI06}, Google Spanner \cite{Spanner_OSDI12}, Microsoft Azure Storage \cite{Azure_SOSPI11}. Raft \cite{Raft_ATC14} recently became a popular alternative, designed with understandability as the priority. Many modern production systems choose Raft over Paxos, e.g., Redis~\cite{RedisRaft}, RethinkDB~\cite{RethinkDBRaft}, CockroachDB~\cite{CockroachDBRaft}, and etcd \cite{etcd}. In this work, we aim to address one of the remaining challenges in building a \textit{high-performance log-based SMR} system within a single datacenter: reducing the engineering effort and simplifying the integration of auxiliary protocols.

Paxos and variants are notoriously difficult to understand and integrate with SMR \cite{Paxos_live_PODC07,Raft_ATC14}. Raft is sometimes considered easier to comprehend with its use of a stronger notion of leader; however,
it still requires a tremendous amount of effort to develop a fully functional Raft-based SMR system \cite{Raft_MITBlog16,Jepsen_Raft_Redis,Jepsen_Raft_RethinkDB16}. 
One root cause is that prior practical consensus algorithms, including Paxos, Raft, and recent systems, require auxiliary protocols (e.g., leader election, snapshotting,  fail-over/recovery mechanism, log compaction/truncation) for ensuring high performance and liveness. We will elaborate on these challenges in Section \ref{s:prior-SMR-challange}. 

\name{} is a \textit{simple} SMR framework that ensures a total order of client requests and achieves high performance in a single datacenter setting. \name{} does \textit{not} need any fail-over protocol and supports trivial log compaction, reconfiguration, and snapshotting protocols. We accomplish this through a novel implementation of consensus that leverages a \underline{RA}ndomized \underline{BI}nary \underline{A}greement protocol.\footnote{These algorithms are also called agreement algorithms, but we will refer to them as consensus algorithms.} 

Randomized algorithms are known to have less stable performance compared to deterministic ones. By limiting our focus to SMR systems deployed in a datacenter where the network infrastructure is \textit{stable}, we can achieve higher throughput than SMR systems implementing total order through Multi-Paxos \cite{lamport1998part,lamport2001paxos}, and Egalitarian Paxos (EPaxos) \cite{EPaxos_SOSP13} in a favorable setup (same availability zone with three replicas), and have comparable performance when deployed with a larger number of replicas or in multiple zones.

We also present RedisRabia, an integration of Rabia and Redis, a popular distributed in-memory data store. We compare RedisRabia with (i) Redis (synchronous) replication that is used widely in production, and (ii) RedisRaft \cite{RedisRaft}, which is Redis integrated with Raft. RedisRabia has comparable throughput with synchronous replication with the same fault-tolerance, and outperforms RedisRaft by 2.5x with batching.

\vspace{3pt}
\noindent\textbf{Challenges and Key Observation}:~
Randomized consensus algorithms are known to have a simpler structure than deterministic consensus algorithms \cite{Ben-Or_PODC83,Aspnes_RandomConsensus_Survey_DC2003}; however, there are two major challenges in deploying them in practical systems. First, most randomized algorithms have less stable performance due to randomized rules that depend on the outcome of a ``coin flip.'' That is, the performance is probabilistic in nature. Second, the latency of randomized algorithms is sub-optimal. All randomized consensus algorithms in the literature have at least four message delays \textit{on average}, whereas  prior deterministic consensus algorithms, e.g., \cite{lamport1998part,lamport2001paxos,Raft_ATC14}, have optimal fast-path latency of two message delays and stay on the fast path when there is a stable leader.

We observe that it is possible to make a randomized \emph{binary} consensus algorithm fast in current network infrastructures. 
In fact, \name{} achieves the best performance if a majority of replicas receive a similar set of messages, which is typical in modern networks. \name{} enables its underlying consensus algorithm to reach agreement in three message delays on the fast path. According to our tests using public commercial clouds and private clusters, current network infrastructures easily provide these characteristics (we call these networks \textit{stable}). 
This observation holds even with legacy network hardware (e.g., without recent RDMA cards). We also empirically show \name{} suffers only a minor performance loss when deployed in multiple availability zones in Google Cloud Platform. This implies that the network conditions that allow \name{} to reach consensus fast are practical, not stringent.

However, a binary consensus algorithm can only take a binary input ($0$ or $1$) and generate a binary output. The log-based SMR design requires a multi-valued consensus algorithm so that the replicas can agree on the order of client requests. Therefore, to make use of fast randomized binary consensus, the next technical challenge \name{} addresses is to efficiently convert binary consensus into a (weaker) form of multi-valued consensus that can be used for SMR.

\vspace{3pt}
\noindent\textbf{Key Techniques}:~
In \name{}, a client sends requests to an assigned replica, which then relays them to all the other replicas. Each replica stores pending requests that are not added to the log yet in its local \textit{min} priority queue (PQ). The key to the min PQ is the request timestamp. This design allows the replicas to have the same head of its PQ most of the time, even under high workload, in a stable network. Replicas then use our new consensus algorithm, \textsc{Weak-MVC}, to agree on the request for each slot of the log.

\textsc{Weak-MVC} is a novel implementation of a \textit{relaxed version} of multi-valued consensus. Each replica's input to Weak-MVC is the \textit{oldest pending request} in its local PQ.
The choice of using the oldest request as the input to our  algorithm allows us to achieve high performance in a stable network. 
If most replicas propose the same request, then \textsc{Weak-MVC} terminates in three message delays. In our evaluation, stable networks allow \name{} to use the fast path $99.58\%$ of the time under all open-loop experiments.

Replicas may propose different requests for the same slot. The approach to address this scenario is our second novel design -- \textit{forfeiting a slot} -- which is different from prior consensus algorithms, e.g., \cite{lamport1998part,lamport2001paxos,Raft_ATC14,EPaxos_SOSP13,Raynal_ISORC01,Raynal_SRDS04,Chen_IPL09}. Instead of deciding \emph{which} request to agree upon, we use binary consensus to determine \emph{whether} there is an agreement. If not, we choose to ``forfeit fast,'' i.e., allow a slot to store a NULL value. In this case, replicas can terminate (with a forfeited slot) also in three message delays. 

The rationale behind our design is that, in a stable network that delivers the oldest pending request $req_{old}$ to all replicas in a short time, replicas can soon reach an agreement on $req_{old}$ in a future slot, even if replicas forfeit current slot(s). Furthermore, a stable and reliable network also avoids continuous forfeits because eventually $req_{old}$ will be delivered. Compared to spending more communication to agree on a request, it is faster to forfeit the current slot, and move on to the next slot. Appendix \ref{app:paxos-noop} discusses how this approach differs from the
usage of a no-op operation in Multi-Paxos.

\vspace{3pt}
\noindent\textbf{Highlights of \name{}}:~
\name{} has the following features designed to favor widespread adoption and integration:

\begin{itemize}[nosep]

    \item 
    \textbf{No Fail-over and Simple Log Compaction}:~ One major challenge in implementing an SMR system is the integration of the auxiliary protocols, including fail-over, reconfiguration, snapshotting, and log compaction protocol (a garbage collection mechanism that discards older log slots from memory) \cite{Raft_MITBlog16,Paxos_live_PODC07,Raft_CockroachDB16,Jepsen_Raft_Redis,Jepsen_Raft_RethinkDB16}. Prior consensus algorithms, including both leader-based designs (e.g., Paxos \cite{lamport1998part,lamport2001paxos} and Raft \cite{Raft_ATC14}) and multi-leader-based designs (e.g., Mencius \cite{Mencius_Marzullo_OSDI08}, EPaxos \cite{EPaxos_SOSP13}, M$^2$Paxos \cite{M2Paxos_DSN16}, Caesar \cite{Caesar_DSN17}, and \textsc{Atlas} \cite{Atlas_Sutra_Eurosys20}), require a complicated fail-over mechanism to recover from the failure of a leader or a command leader, which also makes other auxiliary protocols challenging. 
    
    In contrast, \name{} does not need a fail-over. Intuitively, this is because \name{} uses randomized consensus instead of deterministic consensus to agree on the ordering of client requests, which ensures that at all times, non-faulty replicas are guaranteed to be able to learn the decision. This guarantee also enables a trivial log compaction mechanism in \name{} (which can be specified in three lines of pseudo-code), and simple reconfiguration and snapshotting protocols, as presented in Section \ref{s:discussion}.

    \item \textbf{Machine-checked proof}: In Section \ref{s:formal_verfication_safety}, we briefly describe how we use the Ivy~\cite{McMillanP20} and Coq~\cite{coq} theorem provers to formally verify the safety of our protocol. While a number of protocols and distributed systems have been verified, including Raft and many variants of Paxos~\cite{wilcox:verdi, GleissenthallKB19, TaubeLMPSSWW18, hawblitzel:ironfleet-cacm, PadonLSS17}, we are not aware of any verification of the multi-leader-based designs described above that are \name{}'s closest competitors. The full proof can be found at our GitHub repo \url{https://github.com/haochenpan/rabia/}. 

    \item \textbf{High performance}: 
    We implement \name{} in Go, and compare against Multi-Paxos and EPaxos in Google Cloud Platform, where the network is stable. In our setting, EPaxos with no conflicting workload is the best competitor. When deployed in the same availability zone with three replicas, \name's maximum speedup in throughput is up to 1.5x using a close-loop test, despite its quadratic communication complexity and sub-optimal latency on the fast path. In other cases, \name{} has comparable performance despite its higher message complexity.

\end{itemize}

\vspace{3pt}
\noindent\textbf{Outline}:~
We first enumerate the challenges of implementing existing consensus algorithms and discuss the background in Section \ref{s:prior-SMR-challange}. We then present the complete framework of \name{} and its analysis in Section \ref{s:rabia}. Practical considerations are discussed in Section \ref{s:discussion}. The proofs of our safety and liveness properties appear in Section \ref{s:correctness}. 
We evaluate \name{} in Section \ref{s:evaluation}. Related work is presented in  Section \ref{s:related}.
\section{Motivation and Background}
\label{s:prior-SMR-challange}

\begin{figure*}[t]
	\includegraphics[width=\linewidth]{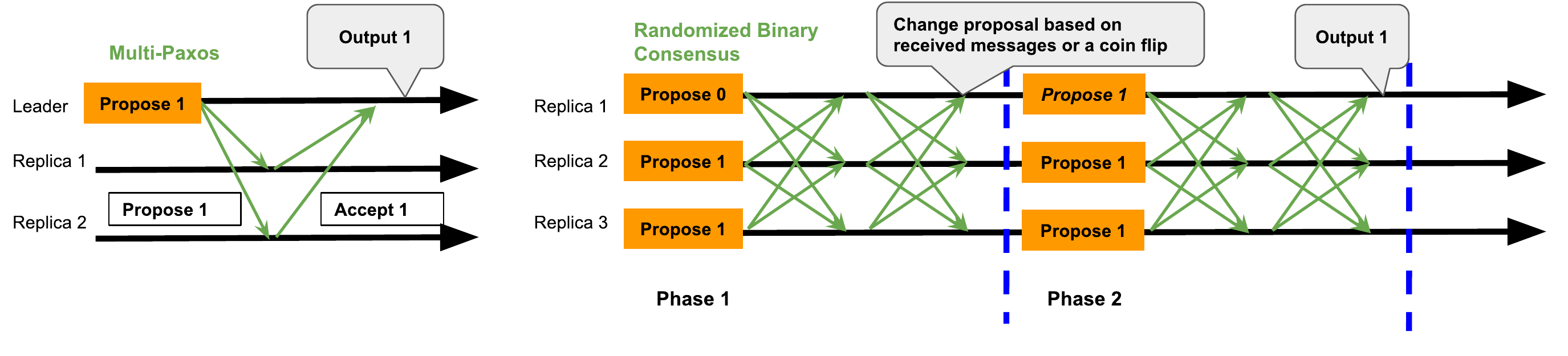}	
	\caption{\textbf{Multi-Paxos vs. Randomized Binary Consensus (Ben-Or's Algorithm \cite{Ben-Or_PODC83})}. Multi-Paxos relies on a leader to make a decision (for a slot), whereas Ben-Or's algorithm requires replicas to make a ``joint decision.'' In the common case, Multi-Paxos terminates in 2 message delays, whereas a randomized consensus algorithm might take multiple phases. In this particular example, Ben-Or's algorithm takes 2 phases (and 4 message delays) to terminate.}
	\label{fig:ben-or} 
	\vspace{-10pt}
\end{figure*}    

\vspace{3pt}
\noindent\textbf{Why Another Simple Consensus Algorithm?}~
The first barrier to implementing a consensus algorithm is understandability \cite{Raft_ATC14,Raft_CockroachDB16,PaxosComplex_ACMSurvey15,Paxos_live_PODC07}. The intuition behind Paxos and variants is difficult to grasp. Such a challenge was perfectly summarized by the developer of Google's Chubby system \cite{Paxos_live_PODC07} (which is based on Multi-Paxos): \textit{``There are significant gaps between the description of the Paxos algorithm and the needs of a real-world system $\dots$ the final system will be based on an unproven protocol.''} Most recent consensus algorithms proposed are inspired by Paxos, e.g., \cite{Mencius_OSDI08,EPaxos_SOSP13,M2Paxos_DSN16,Atlas_Sutra_Eurosys20}, and share a similar challenge in implementation.

Raft \cite{Raft_ATC14} addresses the understandability issue by using a stronger notion of leader to simplify the conceptual design. However, developing a Raft-based SMR system is still quite challenging, even for seasoned developers, e.g., \cite{Raft_CockroachDB16,Raft_MITBlog16,Jepsen_Raft_Redis,Jepsen_Raft_RethinkDB16}. It is because of the second barrier, namely \textit{engineering complexity}. 

While the authors of Raft provided a comprehensive overview of the necessary auxiliary protocols \cite{Raft_CockroachDB16,Raft_MITBlog16}, it is still difficult to develop and integrate them with the core consensus algorithm. For example, Redis \cite{Jepsen_Raft_Redis} and RethinkDB  \cite{RethinkDBRaft} had  bugs in their preliminary integration of Raft into their respective systems. In fact, even after Jepsen \cite{Jepsen_Raft_Redis} identified 21 major bugs back in June 2020, RedisRaft \cite{RedisRaft} showed several performance bugs in our experience (e.g., frequent leader changes when there is no failure, nor message drop). Industry blog posts \cite{Jepsen_Raft_Redis,Raft_CockroachDB16,Jepsen_Raft_RethinkDB16} identified fail-over, log compaction, reconfiguration, and snapshotting as major concerns when developing a Raft-based SMR system.

By using a randomized consensus as a core component, \name{} does not need a fail-over protocol, and hence supports trivial log compaction, snapshotting, and reconfiguration. In addition, \name{} uses two novel designs to achieve high performance in a stable network. We only found two competitors that achieve a similar goal: single-decree Paxos \cite{lamport1998part}\footnote{Single-decree Paxos is designed to reach an agreement on a single slot, whereas Multi-Paxos skips Phase 1 to achieve a higher performance with a stable leader in the case of multiple slots.} and NOPaxos (Network-Ordered Paxos) \cite{NOPaxos_OSDI16}. Section \ref{s:related} discusses recent Byzantine fault-tolerance systems (e.g., HoneyBadger \cite{HoneyBadger_CCS16} and Algorand \cite{Algorand_Micali_SOSP2017,Algorand_Micali_TCS2019}) that also utilize randomized consensus algorithms.

Single-decree Paxos does not need a fail-over, because any replica can help others recover an agreed value in any slot. However, it suffers from performance loss, compared to Multi-Paxos, and in general, has unstable performance. 
In fact, there is no bounded analysis on the average delay. \name{} is proven to have five message delays on average, as analyzed in Section \ref{s:correctness}. 
NOPaxos uses the network fabric to sequence requests (i.e., a sequencer in switches) to simplify design and improve performance. 
As a result, NOPaxos outperforms Multi-Paxos by 4.7x in throughput \cite{NOPaxos_OSDI16}. However, its fail-over and reconfiguration require switches to install new forwarding rules, which, in our opinion, make the development more complicated. In addition, most public clouds do not allow users to directly implement their own logic in switches, which makes NOPaxos less adoptable in general.

\label{s:pre}

\begin{figure*}[!ht]
	\includegraphics[width=\linewidth]{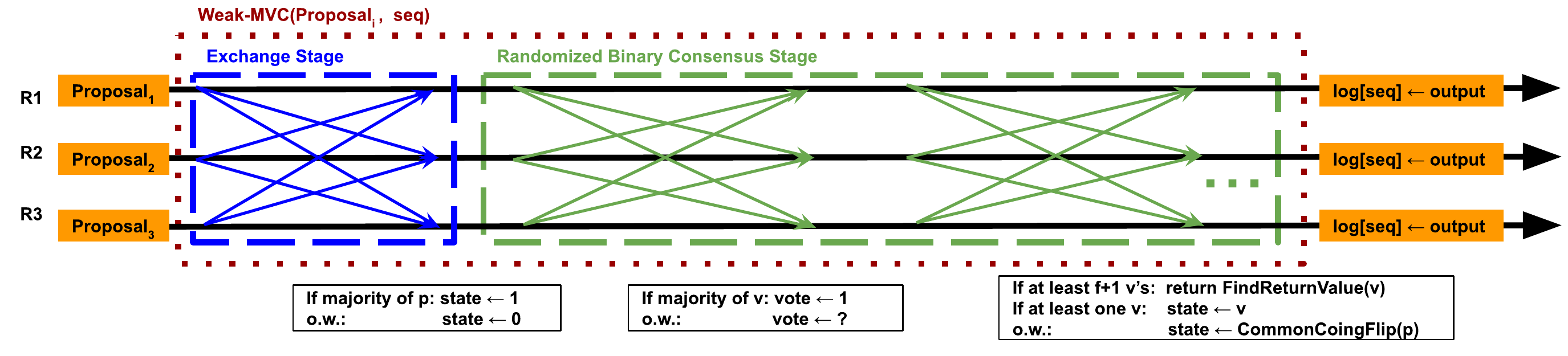}
	\vspace{-25pt}
	\caption{Execution Flow of \name{}. The core component is our consensus algorithm -- Weak-MVC -- whose communication pattern of the fast path is presented inside the red dotted box. The rules to update internal states are also listed underneath. }\label{fig:illustration} 
	\vspace{-10pt}
\end{figure*}

\vspace{3pt}
\noindent\textbf{SMR}:~
State-Machine Replication (SMR) ensures that a service is available and consistent in the presence of failures through replication. SMR ensures linearizability,  or strict serializability in case of transactions. 
We adopt the log-based SMR approach \cite{SMR_Schneider_ACM90,Mu_Aguilera_OSDI20,Raft_ATC14}, which works as follows: (i) each client submits a request to replicas (or servers) and waits for a response; (ii) each replica has a log that stores client requests; (iii) the log is divided into slots; (iv) replicas use a consensus protocol to agree on the ordering of the client requests, i.e., the request to be stored in each slot of the log; (v) replicas apply the requests (i.e., execute the operations in the request) slot by slot; and (vi) after applying the request, a replica sends a response to the corresponding client.

Following prior works (e.g., \cite{Mu_Aguilera_OSDI20,EPaxos_SOSP13,Raft_ATC14}), \name{} stores the log in main memory.
This design is reliable as long as the number of simultaneous replica failures is bounded. For deterministic services, all the replicas have the same state after applying the request in a given slot, because at that point, replicas have executed the same set of operations in the same order. 

\vspace{3pt}
\noindent\textbf{Consensus}:~
The core component of the log-based SMR design is its consensus algorithm (step (iv) above). A consensus protocol is correct if it ensures safety and liveness properties. For each slot of the log, liveness (or \textit{termination}) ensures that all the non-faulty replicas eventually obtain a request. Also, for each slot, safety states two conditions: \textit{agreement}: each non-faulty replica obtains the same request; and \textit{validity}: each obtained request was proposed by a client. 

The famous FLP result \cite{FLP} implies that deterministic consensus algorithms work only in partially synchronous systems. For example, Paxos (and its variants) and Raft ensure safety at all times, but ensures liveness only when the leader is stable and can reach a quorum of replicas.

\vspace{3pt}
\noindent\textbf{Randomized Consensus}:~
Another approach to circumvent the FLP result is through randomization. Ben-Or \cite{Ben-Or_PODC83} proposed a simple randomized binary consensus algorithm in PODC 1983, which achieves the following \textit{probabilistic termination} property: each non-faulty replica eventually obtains a request \textit{with probability 1} -- as time proceeds, the probability for a replica to terminate approaches 1. The reason for the probabilistic termination is that it breaks a tie by flipping a coin \cite{Ben-Or_PODC83}. Hence, an instance of this consensus protocol may take an arbitrarily long time to terminate. In fact, the average latency of Ben-Or's original algorithm is exponential with respect to the number of replicas, because each replica flips its own local coin. The latency is measured as the message delay from one replica to the other.

Figure \ref{fig:ben-or} compares the difference between Multi-Paxos and Ben-Or's algorithm. The left part illustrates Paxos (with a stable leader). 
Upon receiving an acknowledgment from another replica, the leader can safely output the value it proposed, value ``$1$'' in this example. In comparison,  Ben-Or's design does not rely on the notion of a leader. Instead, all the replicas proceed in phases, in which they need to propose and make a ``joint decision.''  Replicas may propose different values. In this case, replicas use a randomized rule (i.e., the outcome of a coin flip) to break a tie. In the example, Replica 1 changes from ``$0$'' to ``$1$'' in Phase 2. Then, all the replicas terminate and output $1$ in this phase.
In Section \ref{s:simple}, we discuss in more detail why such a ``joint decision'' design 
does not need a fail-over protocol, and hence simplifying the complexity in developing and integrating auxiliary protocols.

\vspace{3pt}
\noindent\textbf{Challenges}:~ 
Compared to highly optimized deterministic consensus algorithms like Paxos and Raft, randomized algorithms have a few limitations.  
As illustrated in Figure \ref{fig:ben-or}, the latency of a randomized consensus algorithm is less predictable, because the number of rounds may depend on the outcome of a coin flip. Moreover, the best-case latency is also worse than Multi-Paxos. Concretely, we identify two major challenges in designing a high-performance SMR system based on a randomized consensus algorithm: (i) how to stay on the fast path as often as possible? and (ii) how to reduce tail latency? We discuss how \name{} addresses these challenges in Section \ref{s:Rambia-challenge}.

\newcommand{\request}[0]{\textsc{Request}}
\newcommand{\proposal}[0]{\textsc{Proposal}}
\newcommand{\stateRabia}[0]{\textsc{State}}
\newcommand{\vote}[0]{\textsc{vote}}

\section{The \name{} SMR Framework}
\label{s:rabia}

We consider an asynchronous distributed system consisting of $n$ replicas (or servers). At most $f$ of the replicas may fail by crashing, i.e., a fail-stop model. \name{} ensures safety and liveness as long as $n \geq 2f+1$. We do not consider Byzantine faults. The set of replicas is assumed to be static. An approach for reconfiguring the set of replicas is presented in Section \ref{s:discussion}.

\subsection{Execution Flow}

\name{} adopts the log-based SMR design \cite{SMR_Schneider_ACM90}. Each replica has a log, and the goal is to agree on a client request (or a batch of requests) to be stored in each slot of the log. We present the pseudo-code of each replica $N_i$ in Algorithm \ref{algo:SMR-server}. Figure \ref{fig:illustration} illustrates the flow and rules for \name{} to agree on the request for a particular slot when $n=3$.

\begin{algorithm}[bth]
\begin{algorithmic}[1]
	\small
    \item[{\bf Local Variables}:]{}
    \item[] $PQ_i$ \COMMENT{priority queue, initially empty}
    \item[] $seq$ \COMMENT{current slot index, initially $0$}
    
    \item[] \hrulefill
    \item[{\bf Code for Replica} $N_i$:]{}
    \WHILE{$true$}
        \STATE $proposal_i \gets $ first element in $PQ_i$ that is not already in $log$ \\~~~ \COMMENT{$proposal_i = i$'s input to Weak-MVC}
        \STATE $output \gets$ \textsc{Weak-MVC($proposal_i, seq$)} 
        \STATE $log[seq] \gets output$ \COMMENT{Add $output$ to current slot}
        \IF{$output = \perp$ \textbf{or} $output \neq proposal_i$}
            \STATE $PQ_i.push(proposal_i)$ 
        \ENDIF
        \STATE $seq \gets seq+1$
    \ENDWHILE
    
    \vspace{3pt}
    
    \item[/* Event handler: executing in background */]
    
    \item[{\bf Upon receiving $\langle \request, c \rangle$ from client $c$}:]
    \STATE $PQ_i.push(\langle \request, c \rangle)$
    \STATE forward $\langle \request, c \rangle$ to all other replicas
    
    \vspace{3pt}
    
    \item[/* Executing in background periodically */]
    \item[\textbf{Log Compaction}:]

    \FOR{each $j$-th slot in the $log$}
        \IF{$log[j]$ has been executed locally}
            \STATE truncate $log[j]$ \COMMENT{Discard it or take a snapshot}
        \ENDIF
    \ENDFOR

\end{algorithmic}
\caption{\name: Code for Replica $N_i$}
\label{algo:SMR-server}
\end{algorithm}

A client sends a request to the assigned replica (a client proxy) and waits for a response from the same replica. Upon receiving the client request (Line 8 and 9), $N_i$ first pushes the request to a \textit{min priority queue} $PQ_i$ and then forwards it to all other replicas. $N_i$ uses $PQ_i$ to store pending client requests, i.e., requests that have not been stored in the log yet. The head of the priority queue stores the \textit{oldest pending request}. That is, $N_i$ uses the request timestamp as the key for $PQ_i$. For brevity, the timestamp is omitted in Algorithm \ref{algo:SMR-server}.

Replicas use a while loop to  continuously agree on the request for each slot $seq$ as long as their $PQ$ is non-empty (Line 1 to 7). All the replicas participate in our  consensus algorithm,  Weak-MVC (Weak Multi-Valued Consensus), to agree on a request for slot $seq$ (Line 2 and 3). $N_i$'s input to the Weak-MVC consists of a request extracted from $PQ_i$, and the current slot index $seq$. Due to concurrency and message delay, $PQ_i$ may contain some request $r$ that is actually already stored in the log, because replicas have already agreed to store $r$ in a prior slot. $N_i$ discards such requests, and moves to the next request in $PQ_i$. The condition at Line 2 can be checked efficiently using a dictionary. Section \ref{s:discussion} discusses why this dictionary will not grow unbounded. 

Once Weak-MVC terminates, the output is written into the current slot (Line 4). Weak-MVC is designed in a way that $N_i$ obtains an output with probability 1 as long as a majority of replicas, including $i$, are alive.
The output of Weak-MVC may be a proposal from some replica or a NULL value $\perp$.
A key design choice is to allow a slot to store a NULL value. In this case, $N_i$ learns that the slot is ``forfeited'' and has to push its proposal back into $PQ_i$ for a future slot (Line 6).
Figure \ref{fig:illustration} illustrates the flow of agreeing on a request for the slot $seq$. The red dotted box presents an example execution of Weak-MVC, whose design is presented in Section \ref{s:weak-MVC}.

Unlike prior SMR systems, e.g., \cite{Raft_ATC14,lamport1998part,lamport2001paxos,EPaxos_SOSP13}, performing log compaction in \name{} is very simple because the agreement and liveness properties of the Weak-MVC protocol ensure that all the non-faulty replicas will eventually agree on the same request or NULL for each slot. Therefore, each replica can discard slot $j$ or take a snapshot once it has executed the request stored in the slot (Line 10 to 12). In prior systems, the log compaction mechanism needs to be carefully integrated with a leader election or a fail-over protocol to ensure that the discarded log can be safely recovered if failures occur. Such integration is not always straightforward \cite{Paxos_live_PODC07,Atlas_Sutra_Eurosys20,Raft_ATC14}.

\subsection{Addressing performance challenges of randomized consensus}
\label{s:Rambia-challenge}

Before delving into details of Weak-MVC, let us first investigate how our framework presented in Algorithm \ref{algo:SMR-server} addresses the challenges mentioned in Section \ref{s:pre}: (i) staying on the fast path as often as possible and (ii) avoiding long tail latency in a stable network. 
As it will become clear later, Weak-MVC is guaranteed to use the fast path (which \textit{terminates in three message delays}), if 
\begin{itemize}[nosep]
    \item (i) all replicas have the same proposal; or
    
    \item (ii) no majority of replicas propose the same request.
\end{itemize}
While these two conditions might not always occur, in our experience, they were often satisfied using a stable network.

Condition (i) together with using the \textit{oldest pending request} (stored as the head of $PQ_i$) as a proposal allows  the agreement to be made using mostly the fast path in a stable network. 
This is based on the two following observations: first, each replica forwards a request to other replicas when it receives an incoming request (Line 9 of Algorithm \ref{algo:SMR-server}); and second, if message delay is small compared to the interval between two consecutive requests, then it is highly likely that most replicas have the same oldest pending request $r$ in its local priority queue shortly after that request $r$ was sent.

Condition (ii) reduces the chance for a long tail latency. In most randomized binary consensus algorithms (e.g., \cite{Raynal_ISORC01,Raynal_SRDS04,Ben-Or_PODC83,ByzRandConsensus_Raynal_PODC14}), the slow case occurs when roughly half of the replicas propose $0$ and the others propose $1$. In an unlucky sequence of message delays and coin flips, it might take a long time to terminate because replicas do not observe enough concordant values to jointly make a decision. Therefore, we design Weak-MVC in a way that if it seems difficult to terminate fast, then replicas choose to forfeit the proposal, and Weak-MVC outputs a NULL value $\perp$ in this case. 

In our experience, it is much more efficient to ``forfeit a slot'' than to agree on some client request.  
Recall that the cause behind a forfeit is that replicas had extracted different proposals. During the time that replicas forfeit a slot, the oldest proposal that has not been agreed upon is highly likely to be propagated to all the replicas in a stable network. These replicas would then store this proposal in local PQs, and \name{} can hit the fast path again on the next slot after forfeiting. Therefore, our forfeit-fast approach is more efficient than agreeing on a non-NULL value at all times. The latter approach \cite{Raynal_ISORC01,Raynal_SRDS04,Chen_IPL09} takes logarithmic time with respect to the size of the client requests. 
Section \ref{s:discussion} presents two other practical solutions that further improve the fast-path latency and reduce tail latency.

\subsection{Weak-MVC}
\label{s:weak-MVC}

Allowing a slot to forfeit and contain a NULL value, \name{} violates the validity property by typical consensus algorithms (i.e., an output has to be a client request). Instead, it achieves the following relaxed version of validity:

{
\setlength{\fboxsep}{0pt}%
 \setlength{\fboxrule}{0pt}
\begin{center}
\framebox{%

  \begin{minipage}{0.8\linewidth}
    \textbf{Weak Validity}: the value stored in each slot of the log must either be a request from some client or a NULL value $\perp$.
  \end{minipage}
  }
\end{center}  
}

Our algorithm, Weak-MVC (Weak Multi-Valued Consensus), achieves agreement, weak validity, and probabilistic termination as long as a majority of replicas are correct. The pseudo-code is presented in Algorithm \ref{algo:Weak-MVC} and \ref{algo:Weak-MVC-helper}.

\vspace{3pt}
\noindent\textbf{Pseudo-code and Illustration}:~ Weak-MVC has two stages: exchange stage (Line 1 to 7  in Algorithm \ref{algo:Weak-MVC}) and randomized binary consensus stage (Line 8 to 27  in Algorithm \ref{algo:Weak-MVC}). 
The first stage takes one phase (one message delay).  The second stage takes one phase (two message delays) on the fast path and may take more phases in an unlucky sequence of message delays and coin flips.

\begin{algorithm}[tb!]
\begin{algorithmic}[1]
	\small

    \item[{\bf When \textsc{Weak-MVC} is invoked with input $q$ and $seq$:}]
    \STATE \slash \slash~ Exchange Stage: exchange proposals
    \STATE Send ($\proposal, q$) to all  \COMMENT{$q$ is client request}
    \STATE \textbf{wait until} receiving $\geq n-f$~~ \proposal~ messages
    \IF{request $q$ appears $\geq \lfloor\frac{n}{2}\rfloor + 1$ times in \proposal s}
        \STATE $state \gets 1$ 
    \ELSE
        \STATE $state \gets 0$ 
    \ENDIF
    
    \vspace{3pt}
    
    \STATE \slash \slash~ Randomized Binary Consensus Stage (Phase $p \geq 1$)  
    \STATE $p\gets 1$ \COMMENT{Start with Phase $1$}
    \WHILE{$true$}
        \STATE /* Round 1 */
        \STATE Send ($\stateRabia, p, state$) to all \COMMENT{$state$ can be $0$ or $1$}
        \STATE \textbf{wait until} receiving $\geq n-f$ phase-$p$ \stateRabia~messages
        \IF{value $v$ appears $\geq \lfloor\frac{n}{2}\rfloor + 1$ times in \stateRabia s}
            \STATE $vote \gets v$
        \ELSE
            \STATE $vote \gets ?$
        \ENDIF
        \STATE /* Round 2 */
        \STATE Send ($\vote, p, vote$) to all \COMMENT{$vote$ can be $0, 1$ or $?$}
        \STATE \textbf{wait until} receiving $\geq n-f$ phase-$p$~~ \vote~ messages
        \IF{a non-$?$ value $v$ appears $\geq f+1$ times in \vote s}\label{line:majv>=f+1}
            \STATE \textbf{Return} \textsc{FindReturnValue}($v$) \COMMENT{Termination}
        \ELSIF{a non-$?$ value $v$ appears at least once in \vote s}
            \STATE $state \gets v$ \label{line:maj-value}
        \ELSE
            \STATE $state \gets \textsc{CommonCoinFlip}(p)$ \COMMENT{$p$-th coin flip} \label{line:coin-value}
        \ENDIF
        \STATE $p \gets p+1$ \COMMENT{Proceed to next phase}
    \ENDWHILE
\end{algorithmic}
\caption{Weak-MVC: Code for Replica $i$}
\label{algo:Weak-MVC}
\end{algorithm}

\begin{algorithm}[tb!]
\begin{algorithmic}[1]
	\small

    \item[\textbf{Procedure} \textsc{FindReturnValue}($v$)]
    \IF{$v = 1$}
        \STATE Find value $m$ that appears $\geq \lfloor \frac{n}{2} \rfloor + 1$ times \\~~~~~~~~~~
        in \proposal~messages received in Phase $0$
        \STATE \textbf{Return} $m$
    \ELSE    
        \STATE \textbf{Return} $\perp$ \COMMENT{return null value}
    \ENDIF    
    
\end{algorithmic}
\caption{Weak-MVC: Helper Function}
\label{algo:Weak-MVC-helper}
\end{algorithm}

The second stage is an adaption of Ben-Or's algorithm \cite{Ben-Or_PODC83}. Intuitively, we replace the per-replica local coins used by Ben-Or with a common coin (Line 26 of Algorithm \ref{algo:Weak-MVC}), whose value is the same across all replicas in the same phase. Section \ref{s:discussion} briefly discusses an efficient implementation of a common coin by having all replicas use a pseudo-random number generator or a hash with a common seed \cite{Fault-tolerantBook_Raynal_Book18}. The seed can be configured at deployment, so there is no communication involved when using the common coin.

The red dotted box in Figure \ref{fig:illustration} presents the flow and the rules of Weak-MVC. The exchange stage and randomized binary consensus stage are shown in the blue and green dotted box, respectively. 
For brevity, we present the case when the randomized binary consensus stage terminates in one phase. The three green dots at the end of the red box denote the possibility that this stage might need more phases to agree on an output. We will prove later in Section \ref{s:correctness} that Weak-MVC terminates with probability 1, and its average latency is five message delays.

\vspace{3pt}
\noindent\textbf{Algorithm Description}:~
Weak-MVC uses three types of messages: (i) \proposal{} message that carries a proposed value in the exchange stage (the input to  Weak-MVC); (ii) \stateRabia{} message that contains the state value, denoted by the variable $state$ in the pseudo-code; and (iii) \vote{} message that is used to cast votes in each phase. The value in the \proposal{} message is a client request. The state value is either $0$ or $1$, whereas the vote could take the value of $0, 1$ or $?$. The $?$ is a unique symbol denoting that the replica does \textit{not} vote for $0$ or $1$, i.e., it gives up the vote in this phase. 
For messages sent in phase $p$, we will refer to it as phase-$p$ messages. 
Each message is tagged with slot index $seq$, phase index $p$, and sender ID. For brevity, we omit $seq$ and ID in the pseudo-code.

In the exchange stage, replicas exchange their proposals (a client request) at Line 2 of Algorithm \ref{algo:Weak-MVC}, and decide the input to the second stage by updating the $state$ variable based on the received proposals. In all communication steps, replicas wait for $n-f$ messages of a certain type (Line 3, 13, and 20 of Algorithm \ref{algo:Weak-MVC}). Thus, Weak-MVC is \textit{non-blocking}. Each replica sets $state$ to $1$ if it sees a majority of messages with the same request $q$ 
(Line 5); otherwise, it sets $state$ to $0$ (Line 7). Using the majority as the threshold ensures that two replicas with $state = 1$ at the end of the exchange stage must have seen the same proposal from a majority of replicas. 

\begin{figure*}[!ht]
	\includegraphics[width=\linewidth]{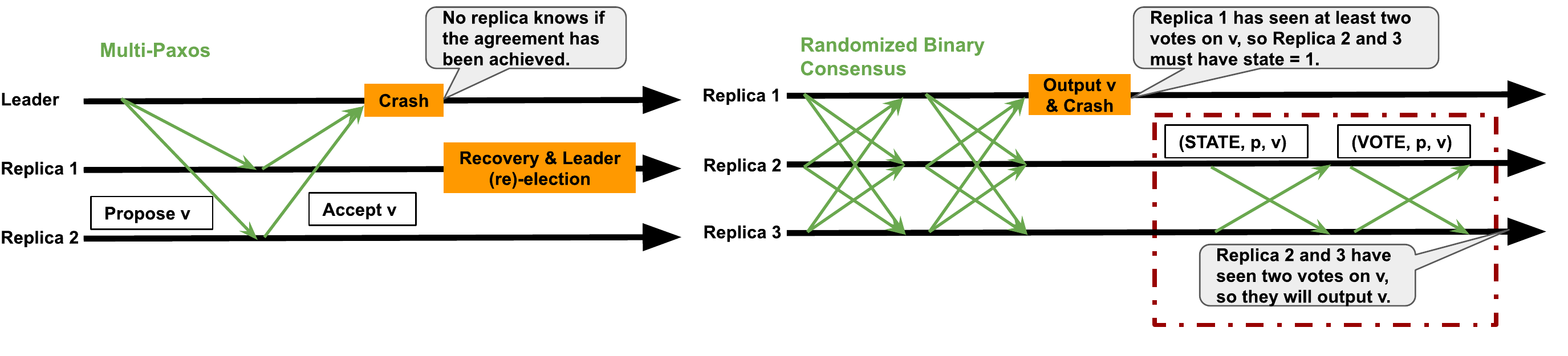}	
	\vspace{-25pt}
	\caption{\textbf{Fail-over in Multi-Paxos vs. No fail-over in \name{}.} If the leader in Multi-Paxos fails, then the remaining replicas need to execute a fail-over protocol to recover decisions made by the crashed leader and re-elect a new leader. \name{} does not need any fail-over. If Replica 1 outputs a value $v$, then the remaining replicas are guaranteed to output $v$ in the next phase, no matter if Replica 1 fails or not. For brevity, only the randomized binary consensus stage of Weak-MVC is shown. }\label{fig:failover} 
	\vspace{-10pt}
\end{figure*}

The second stage, the randomized binary consensus stage, proceeds in phases, and each phase has two rounds. In the first round of phase $p$, replicas exchange their $state$ variables and decide their vote for this phase. Each replica sets $vote$ to $v$ if they have seen a majority of phase-$p$ $\stateRabia$ messages containing $v$, which is either $0$ or $1$; otherwise, the replica updates $vote$ to $?$. 
Using a typical quorum intersection argument on $state$ messages, it is guaranteed that there is at most one non-$?$ value in all the votes. That is, in the same round, we have either $vote \in \{0, ?\}$ or  $vote \in \{1, ?\}$ at all the replicas. The proof of the safety property relies on this observation.

In the second round of phase $p$, replicas exchange their votes stored in the $vote$ variable. If a non-$?$ value $v$ appears at least $f+1$ times in the received $\vote$ messages, then replicas can safely output a value using the helper function (Line 22). If a non-$?$ value $v$ appears at least once, then the replica sets $state$ for the next phase as $v$ (Line 24). If all phase-$p$ $\vote$ messages received contain $?$, then the replica flips a \textit{common coin} to determine the state for the next phase (Line 26) -- this is the randomized rule in Weak-MVC. \textsc{CommonCoinFlip(p)} denotes the $p$-th coin flip at each replica. A common coin guarantees that the $p$-th coin flip at each replica returns an identical value. That is, all the replicas that execute Line 24 in the same phase must obtain the same $state$ variable. Such a coin can be easily implemented in our scenario as detailed later in Section \ref{s:discussion}. Replicas proceed to the next phase if they have not already terminated at Line 22.

The thresholds used ensure that all the non-$?$ values contained in phase-$p$ $\vote$ messages are identical. Hence, if replicas execute Line 22 or Line 24, they will see the same $v$. The beauty of Ben-Or's design, which we borrowed, is that replicas break a tie using a coin flip. Recall that a non-$?$ value $v$ is either $0$ or $1$. Hence, when replicas flip a common coin at Line 26, it has probability $1/2$ to flip to $v$ and will terminate in the next phase. Section \ref{s:correctness} 
extends this observation to obtain the probabilistic termination property.

\vspace{3pt}
\noindent\textbf{Examples}:~
The fast path is three message delays, as illustrated in Figure \ref{fig:illustration}. 
Section \ref{s:Rambia-challenge} mentioned two cases that Weak-MVC is guaranteed to use the fast path. We now present the details.
Consider the case when each replica has the same proposal, then Weak-MVC  terminates in three message delays, because (i) all replicas have $state = 1$ after the exchange stage; (ii) all replicas have $vote = 1$ after round 1 of phase $1$; and (iii) all replicas will execute Line 22 because all the $\vote$ messages contain $1$. Another case of the fast path is when each replica has a unique proposal. In this case, (i) all replicas have $state = 0$ after the exchange stage; (ii) all replicas have $vote = 0$ after round 1 of phase $1$; and (iii) all replicas will execute Line 22 because all the $\vote$ messages contain $0$.

Weak-MVC might be slow if in every communication step, only roughly half of replicas take a particular branch of the if statements, particularly when half of the replicas execute Line 24 and the other half execute Line 26. A slow-case example with $n = 3$ is as follows: (i) two replicas have $state = 1$ and one has $state = 0$ at the beginning of round 1 in the second stage; (ii) two replicas have $vote = ~?$ and one has $vote = 1$ at the beginning of the round 2, because they observed different sets of $\stateRabia$ messages; and (iii) two replicas execute Line 24 with $state = 1$, and one replica executes Line 26 by flipping a coin and obtains $0$ for its $state$. Note that step (iii) is identical to step (i). In other words, no progress has been made. This sequence of events might repeat for a few phases if the coin flip is different from $v$ at Line 26. 
Section \ref{s:correctness} proves that the average latency of \name{} is five message delays. Our evaluation in Section \ref{s:evaluation} shows that such a long tail latency rarely occurs in a stable network.

\subsection{No Fail-Over and Simple Log Compaction}
\label{s:simple}

Prior systems, e.g., \cite{lamport1998part,lamport2001paxos,Raft_ATC14,EPaxos_SOSP13,Atlas_Sutra_Eurosys20,Mencius_Marzullo_OSDI08}, require a complicated protocol to handle failures, especially when a leader, or a command leader in a multi-leader design, fails. Intuitively, this is because when a leader crashes, replicas need to execute an instance of leader election to ensure liveness. Meanwhile, for safety, replicas need to recover the decision(s) that the previous leader has already made \cite{Atlas_Sutra_Eurosys20,Paxos_live_PODC07,Raft_CockroachDB16}. This task is non-trivial since replicas may have a stale log and may fail again during leader election. In practical systems that optimize throughput with pipelining and leases (for leader or read operations), fail-over becomes even more complicated, because some of the decisions may not have been persisted to a quorum of replicas yet. An example scenario of a crashed leader in Multi-Paxos is depicted in Figure \ref{fig:failover}.

\name{} does not need any fail-over protocol because it does not rely on a notion of leader, nor command leader. To agree on a value $v$ for a slot, a group of replicas must cast votes for $v$ in some phase of the randomized binary consensus stage; hence, even if some replica(s) fails, a replica can still learn $v$ from the non-faulty replicas in the group. In other words, the use of the $\vote$ messages ensures that no single replica makes the decision alone, unlike prior algorithms. In \name, a group of replicas makes a ``joint decision.'' This design simplifies the failure handling. 

The execution on the right part of Figure \ref{fig:failover} illustrates the case when there are three replicas, and only Replica 1 observes two $\vote$ messages of $v$ and executes Line 22 of Algorithm \ref{algo:Weak-MVC}. Then, it fails shortly after it learns the output. This scenario is analogous to the case when the only replica that knew the output fails in Multi-Paxos. 
Suppose Replica 2 and 3 did not learn the decision in Phase 1 and proceed to the next phase as described in Algorithm \ref{algo:Weak-MVC} with $state = v$. This is because they must have observed at least one $\vote$ message of $v$ in Phase 1. By construction, after the exchange of phase-2 $\stateRabia$ messages, both Replica 2 and 3 update $vote$ to $v$; hence, they will output $v$ in Phase 2. Note that these steps are already specified in Algorithm \ref{algo:Weak-MVC}, and Weak-MVC does \textit{not} need an auxiliary protocol to handle failures.

Log compaction is difficult in prior systems because a leader needs to send a snapshot to replicas that lag behind \cite{CockroachDBRaft,Paxos_live_PODC07}. The issue becomes even more challenging during the process of leader re-election and recovery.  In fact, Chubby developers \cite{Paxos_live_PODC07} mentioned that ``\textit{(Snapshotting) mechanism appears straightforward at first... However, it introduces a fair amount of complexity into the system.}'' Particularly, the meta-data needs to be stored along with the snapshot itself, which both need to be recovered consistently if a leader failure occurs. \name{} does not need a fail-over and does not rely on the notion of leader; hence, it supports a simple mechanism for log compaction. In particular, each replica has the same responsibility in \name{}, so a slow replica can learn from any other replica to catch up with missing slots in a stable network where no message is lost between non-faulty replicas. Once a slow replica gets to the current slot, it can participate in the agreement process again. If messages could be lost, then log compaction needs to be made more conservative than the one presented in Algorithm \ref{algo:SMR-server} -- replicas need to make sure that a quorum has already agreed on a slot before compacting a slot. 

\subsection{Performance Analysis and Trade-off} 

Rabia does not need a fail-over protocol, but it demands more communication, which we believe is a good trade-off in the target scenario. As we will see in Section \ref{s:evaluation}, \name's performance with three replicas in the same availability zone is better than prior systems. Although we observe a degrading performance with a larger $n$ or deployment across different availability zones (in a datacenter), \name{} still has performance comparable with EPaxos.

\vspace{3pt}
\noindent\textbf{Fast Path Case}: For $n = 3$, we analyze the best-case performance. Both Multi-Paxos and EPaxos (with zero conflict rate) need two message delays, whereas \name{} has three message delays. Mutli-Paxos sends four messages, and EPaxos sends four messages per command-leader.  
\name{} sends 18 messages. Note that Weak-MVC only sends the original proposal (a request) in the exchange stage and sends a message containing $0, 1$ or $?$ in later phases. That is, bit complexity is dominated by client request size. Therefore, \name{}'s communication overhead is not a bottleneck with small $n$ in a stable network.

\vspace{3pt}
\noindent\textbf{Average and Worst Case}:
We prove later in Section \ref{s:correctness} that the average latency of \name{} is five message delays on average. 
As stated earlier, it is possible, in theory, to have a large tail latency due to the slow case of Weak-MVC; however, we argue that such a corner case hardly occurs in practice, as witnessed in our evaluation. In Section \ref{s:discussion}, we present two practical approaches to mitigate such an unlucky scenario.

\vspace{3pt}
\noindent\textbf{Bottleneck}:
In our evaluation, we found that the bottleneck of Multi-Paxos is the leader. It requires a stable leader to maintain high-performance; however, leader-client and leader-replica communications, especially serialization/deserialization,  become the bottleneck. We see 1.5x to 3x difference in throughput between open-loop tests and close-loop tests. As for EPaxos, the bottleneck is local computation time, i.e., dependency check for identifying a permissible order across client requests. Even in the case when there is no conflict, EPaxos still needs to check all the dependencies to ensure safety. The check is proportional to the number of clients, replicas, and the number of client requests in a batch. With 100 clients and batch size 1, the median time for EPaxos to complete the dependency check is $0.29$ms, roughly one RTT in the current datacenter. The time increases to $1.90$s when batch size increases to 80. Appendix \ref{app:EPaxos-local-time}  presents its median time for key functions with 100 clients and varying batch size.

There are two characteristics of \name{} that can potentially reduce performance: message complexity and stable network. We found the trade-off reasonable. 
\name{} does not need an auxiliary fail-over protocol; moreover, it distributes workload evenly to all replicas, and does not have expensive local computation. As we will see in Section \ref{s:evaluation}, these are why \name{} outperforms Paxos and EPaxos when $n=3$ in the same availability zone.

\section{Practical Considerations}
\label{s:discussion}

We implemented \name{} in Go, version 1.15.\footnote{\url{https://github.com/haochenpan/rabia/}} The entire framework, including the key-value storage on top of our system, consists of around 2,200 lines of Go code. Go is a compiled language and has garbage collection and built-in support for managing high concurrency. The implementation follows closely with the pseudo-code. We use go-routines and go-channels for handling connection and communication. This simplifies the logic and increases parallelism. 

\vspace{3pt}
\noindent\textbf{Common Coin}:~ \name{} uses a common coin (Line 26 in Algorithm \ref{algo:Weak-MVC}). We implement it by using a random binary number generator with the same seed across all replicas. This satisfies our purpose by (i) picking $0$ or $1$ randomly in each phase; and (ii) giving the same value to all replicas in the same phase, i.e., all replicas have the same $p$-th coin flip. A seed is pre-configured in such a way that for each slot, all replicas use a common seed. When the system is reconfigured, the seeds are reset using a deterministic rule, i.e., the slot index plus the configuration index (epoch number) decide the seed.

The implementation of our common coin is based on the seminal work by Rabin \cite{Rabin_FOCS83_randomizedByz} and subsequent works by Michel Raynal \cite{Raynal_SRDS04,Raynal_TDSC05_randomizedByz} on binary randomized Byzantine consensus. 
Ben-Or's local coin approach \cite{Ben-Or_PODC83} tolerates a dynamic adversary, which may determine the set of faulty replicas and delay messages based on the coin flip. The use of a common coin tolerates a weaker adversary which assumes that the faults and message delays are independent of the outcome of the coin flip, which is adequate in  practical settings. 

\vspace{3pt}
\noindent\textbf{Dictionary}:~
At Line 1 of Algorithm \ref{algo:SMR-server}, \name{} uses a dictionary to keep track of whether the head of the $PQ$ is already in the log or not.  \name{} only needs to store a non-$\perp$ $output$ (the output of \textsc{Weak-MVC}) in the dictionary at Line 6, when this $output$ does not match $proposal_i$. TCP implies that the communication is reliable, and a message is delivered exactly once, if the sender is correct. Therefore, replica $i$ is guaranteed to extract $output$ again from its $PQ_i$ when there is no failure. 
At this step, $output$ can be removed from the dictionary. By assumption, up to $f$ replicas may fail, so $PQ_i$ still has a bounded size if some replicas crash. In our experience, the size of the dictionary is mostly empty in a stable network.

\vspace{3pt}
\noindent\textbf{Batching}:~
Following prior work, e.g., \cite{Mencius_Marzullo_OSDI08,EPaxos_SOSP13,M2Paxos_DSN16,Caesar_DSN17}, we use batching to improve throughput and network utilization. That is, instead of agreeing on one client request for a slot, replicas can agree on a batch of requests. We implement two forms of batching:
\begin{itemize}[nosep]
    \item \textit{Proxy batching}: replicas batch a fixed number of client requests before pushing it into the priority queue $PQ$ (Line 8 of Algorithm \ref{algo:SMR-server}), and forwarding to other replicas (Line 9). Consequently, an entire batch is treated as a proposal at Line 2.  If there is no failure, then each batch contains different requests.

    \item \textit{Client batching}: instead of sending a single request, clients send a batch of requests (i.e., a set of operations) in one message. In practical systems, user applications typically communicate with a load balancing service, which can perform this type of batching by collecting requests from multiple clients \cite{NetCache_Stoica_SOSP17,fitzpatrick2004distributed}. Moreover, as outlined in the experience paper of scaling Memcache by Facebook \cite{Facebook_Memcache_NSDI13}, a query touches 24 keys on average. Such a query can be viewed as a batch of 24 requests to the key-value storage system.
\end{itemize}

\vspace{3pt}
\noindent\textbf{Failure Recovery by Clients}:~
In our design, a client communicates with a single replica, i.e., a proxy replica. If the proxy replica fails, then the client relies on a timeout to detect the unresponsiveness. In such a case, the client re-sends its request to another (randomly selected) replica. To handle the potential duplicated requests in the log, we use the standard solution of embedding a unique ID (a pair of client ID and a sequence number) in each request. When applying requests from the log, a replica skips any duplicated request. Appendix \ref{app:failure} presents an experiment demonstrating that Rabia has a quick recovery from a replica failure.

\vspace{3pt}
\noindent\textbf{Reconfiguring the Replicas}:~ Since \name{} does not need a fail-over, and all replicas essentially have equal responsibility, it is simple to perform reconfiguration. We treat \texttt{add-replica} and \texttt{remove-replica} as special commands. A system administrator (or an auxiliary automated membership management component) can submit a special command $c$ to any of the replicas. Replicas will then use \textsc{Weak-MVC} to agree on the slot for $c$. Eventually, all replicas will learn $c$, and in the next slot, if $c$ is \texttt{add-replica}, then the new replica will join the protocol; otherwise, the removed replica will leave the system. The reason that the reconfiguration is simple compared to prior systems \cite{Raft_ATC14,EPaxos_SOSP13,VerticalPaxos_Lamport_PODC09} is that \name{} does not rely on the notion of the leader; therefore, every replica is eventually obtaining the same information, which allows them to change to a new configuration jointly. Prior systems need to carefully integrate reconfiguration with leader election to deal with potential leader failures \cite{Raft_CockroachDB16,Paxos_live_PODC07}.

\vspace{3pt}
\noindent\textbf{Tail Latency Reduction}:~We briefly describe two practical approaches to further reduce tail latency: (i) Using an eventually correct failure detector \cite{phi_failureDetector_SRDS04} to allow replicas to receive a consistent set of messages; and (ii) using a freeze time before participating in Weak-MVC, which allows the oldest pending request to be delivered at replicas. The rationale behind these two approaches is to increase the probability of having the same head of each $PQ_i$ for replica $N_i$, which allows \name{} to take the fast path.\footnote{Appendix \ref{app:tail-latency-reduction} presents more detail.} However, these two are optimizations that are \textit{not} currently implemented because the network used in our experiments, Google Cloud Platform, shows stability, and hence none of these optimizations should have had any impact on our performance study. 

In addition, we adopt a common optimization that allows a slow replica to catch up by asking other replicas when it misses the proposal from a prior slot. That is, a slow replica that learns the decision of a slot may send a request message to other replicas to learn the proposal that has already been agreed upon. This allows the slow replica to participate in the next slot without waiting for delayed messages.

\vspace{3pt}
\noindent\textbf{Pipelining}:~ 
Pipelining is a common optimization that allows the system to proceed to the next slot(s) without learning the output of the current slot. In other words, multiple instances of consensus algorithms are being executed simultaneously. This optimization increases throughput because communication latency is amortized through concurrent slots. 

While \name’s implementation does not include pipelining, our framework can be extended to support it. As presented in Algorithm \ref{algo:SMR-server}, each replica has one PQ for providing inputs to Weak-MVC. To enable pipelining, we can have multiple PQs for each replica. Then \name{} has one PQ to handle the request batches from a fixed set of replicas, and multiple instances of Weak-MVC can run concurrently and independently. Since randomization ensures that each instance is guaranteed to terminate, liveness still holds with this pipelining strategy.

\section{Safety and Liveness Properties}
\label{s:correctness}

The design of Rabia lends itself to formal proofs of correctness.  We briefly discuss how we use Ivy and Coq to formally verify its safety.\footnote{ 
The complete proof can be found at our GitHub repo \url{https://github.com/haochenpan/rabia/}.} 
We then present a simple analysis of liveness.

\vspace{3pt}
\noindent\textbf{Formal Proof of Safety}:~
\label{s:formal_verfication_safety}
The safety of Weak-MVC implies the safety of \name. Weak-MVC's key safety properties are \emph{weak validity} and \emph{agreement}. In the discussion below, we say that a replica ``decides'' on a non-$?$ value $v$ in phase $p$ if the replica executes Line 22 of Algorithm \ref{algo:Weak-MVC} after seeing a majority of phase-$p$ $\vote$ messages with value $v$.
By construction, replicas will decide on only $0$ or $1$.

Weak validity says that if a replica decides on a value $v$ other than $\perp$, some client must have initially proposed $v$. Agreement says that if a replica decides on $v_1$ and another replica decides on $v_2$, then $v_1 = v_2$. These are the usual properties expected of a correct consensus algorithm, except that weak-validity allows for replicas to decide on $\perp$.

We prove these two safety properties using a combination of the Ivy~\cite{McMillanP20} and Coq~\cite{coq} verification tools. Ivy is an automated tool that checks whether a property is an \emph{inductive invariant} of a system, meaning that it holds before and after every action of a replica. Automated proof search is achieved by careful restrictions on the types of properties that can be expressed and checked in order to ensure that they fall within a decidable fragment of first-order logic. Meanwhile, Coq is a general-purpose theorem prover, which is highly expressive but requires a human to construct the proof.

The core part of our safety proof is showing that the randomized binary consensus stage of Weak-MVC (Algorithm \ref{algo:Weak-MVC}) satisfies agreement. For this, we follow the structure of prior paper proofs of correctness for Ben-Or's algorithm~\cite{AguileraT12}. 
We first use Ivy to verify that the following four properties are inductive invariants of the system:
\begin{enumerate}[nosep]
 \item Any two decisions within a phase must be on the same value $v$.
 \item Once a replica decides on a value $v$, the next phase is \emph{value-locked} on $v$, meaning that all the replicas that have neither crashed nor decided must enter the next phase with $state=v$.
 \item If a phase is value-locked on $v$, any decisions within that phase must be for $v$.
 \item If phase $i$ is value-locked on $v$, then $i+1$ is also value-locked on $v$. 
 \end{enumerate}
 We next use Coq to prove that agreement follows from the above four properties by induction on the phase number. 
 
 Splitting the proof across the two tools in this way makes use of the relative strengths of each. Checking the four properties above is straightforward in Ivy but would require many lines of proof in Coq. Conversely, trying to do the induction on the phase number within Ivy seems difficult to do while remaining in the supported decidable fragment.\footnote{Recent versions of Ivy have some support for including interactive proofs, which might allow doing this proof entirely within Ivy.}
 
 The Ivy part of the proof is 366 lines of code, about 150 of which are a description of Weak-MVC in Ivy's modeling language. The remainder are statements of other intermediate inductive invariants that Ivy uses to establish the four properties above. The Coq file is 190 lines of code, with 128 lines describing the system, axiomatizing the properties checked by Ivy, and stating the agreement and weak validity.

\vspace{3pt}
\noindent\textbf{Proof of Liveness}:~
For liveness, we need to show that Weak-MVC terminates with probability $1$, as long as the majority of replicas are non-faulty. The proof structure follows from prior works \cite{Ben-Or_PODC83,Aspnes_RandomConsensus_Survey_DC2003,Fault-tolerantBook_Raynal_Book18,ByzRandConsensus_Raynal_PODC14}. We first prove the following lemma and use it to prove three key theorems. 

\begin{lemma}
\label{lemma:termination-probability}
Each phase has probability at least $\frac{1}{2}$ of leading to termination. That is, all the non-faulty replicas output a value before or at the end of phase $p$.
\end{lemma}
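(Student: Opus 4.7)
The plan is to mirror the classical Ben-Or-style probabilistic termination argument, simplified by the use of a common coin in place of local coins. I fix an arbitrary phase $p$ at which some non-faulty replica has yet to decide, and show that, with probability at least $\tfrac{1}{2}$ over the phase-$p$ random choices, every non-faulty replica has decided by the end of phase $p+1$.

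The first step is a quorum-intersection sub-lemma for the round-1 $\stateRabia$ exchange of phase $p$. Since each replica waits for $n-f$ messages and $n \geq 2f+1$, any two such quorums overlap in at least $n-2f \geq 1$ replicas, which forbids two distinct strict majorities on different values. Hence there is at most one candidate $v_p$ that any replica can adopt as its $vote$ in phase $p$, and as a consequence each replica ends round 2 in exactly one of three buckets: (i) it decides on $v_p$ via Line \ref{line:majv>=f+1}; (ii) it sets $state \gets v_p$ via Line \ref{line:maj-value}; or (iii) it flips the common coin via Line \ref{line:coin-value}.

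The second step bounds the probability that every alive replica enters phase $p+1$ with a common $state$ value $w$. If $v_p$ is undefined (every alive replica is in bucket (iii)), the common-coin property gives agreement with probability $1$. Otherwise, bucket (ii) replicas have $state = v_p$ deterministically while bucket (iii) replicas match this value with probability exactly $\tfrac{1}{2}$. Applying this case analysis uniformly to correct and to faulty-but-still-alive replicas (they are constrained identically by the quorum-intersection and common-coin arguments), I conclude that, with probability $\geq \tfrac{1}{2}$, every alive replica enters phase $p+1$ with the same $state = w$.

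The third step closes the loop by arguing that agreement on $state = w$ at the start of phase $p+1$ forces all non-faulty replicas to fire Line \ref{line:majv>=f+1} in phase $p+1$. Every alive replica then sends $\stateRabia$ messages carrying $w$, so the $n-f$ round-1 messages received by any non-faulty replica all carry $w$; since $n - f \geq \lfloor n/2 \rfloor + 1$, the round-1 majority test succeeds and every non-faulty replica sets $vote = w$. Re-running the same reasoning in round 2 shows that every non-faulty replica receives $\geq f+1$ $\vote$ messages equal to $w$ and so invokes \textsc{FindReturnValue}$(w)$, completing the phase. The main obstacle I anticipate is the careful accounting of faulty-but-alive replicas: the argument would break if such a replica could inject a $\stateRabia$ message carrying a value different from $w$ into another replica's round-1 quorum, so I would have to verify that the bucket classification, combined with the conditional event that the common coin returned $v_p$, forces the $state$ of every faulty-but-alive replica to equal $w$ just as it does for correct replicas. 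Combining these pieces gives the claimed $\tfrac{1}{2}$ bound.
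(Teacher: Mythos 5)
Your proof is correct and follows essentially the same route the paper intends: the paper does not spell out the argument but explicitly defers to the standard Ben-Or-with-common-coin structure (quorum intersection forces a unique candidate vote $v_p$; the common coin matches $v_p$ with probability $\tfrac{1}{2}$, value-locking the next phase; a value-locked phase decides unanimously), which is exactly your three steps. The only discrepancy is one of phase indexing --- you show that the phase-$p$ coin yields termination by the end of phase $p+1$, whereas the lemma's wording attributes termination to phase $p$ itself; your reading is the one that is actually provable, since the coin is flipped after the decision point of its own phase.
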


\begin{theorem}
\label{thm:expected-constant-time}
Weak-MVC has average round complexity $=5$.
\end{theorem}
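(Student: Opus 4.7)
The plan is to separate the cost of Weak-MVC into the deterministic one-message-delay exchange stage and the variable cost of the randomized binary-consensus stage, then convert Lemma~\ref{lemma:termination-probability} into a bound on the expected number of phases.

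First I would observe that the exchange stage (Lines~1--7 of Algorithm~\ref{algo:Weak-MVC}) contributes exactly one message delay, as each replica sends a single \proposal{} message and waits for $n-f$ such messages. Each subsequent phase of the binary-consensus stage contributes exactly two message delays: one round of \stateRabia{} messages followed by one round of \vote{} messages. Letting $P$ denote the (random) phase in which all non-faulty replicas have terminated, the expected round complexity is therefore $1 + 2\,\mathbb{E}[P]$.

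Next I would bound $\mathbb{E}[P]$ using Lemma~\ref{lemma:termination-probability}, read in its natural conditional form: for every $p \ge 1$, conditioned on not having terminated by phase $p-1$, the protocol terminates by the end of phase $p$ with probability at least $1/2$. This implies $\Pr[P > k] \le (1/2)^k$, so $P$ is stochastically dominated by a geometric random variable with parameter $1/2$, giving $\mathbb{E}[P] \le \sum_{k \ge 0}(1/2)^k = 2$. Substituting yields $1 + 2\cdot 2 = 5$ message delays on average, matching the claimed bound.

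The main obstacle I anticipate is making the per-phase conditional-independence step precise. One must verify that the $1/2$ guarantee from Lemma~\ref{lemma:termination-probability} applies afresh at every phase regardless of the execution history, i.e., that neither the adversarial message schedule nor the identity of the faulty replicas can bias the outcome of \textsc{CommonCoinFlip}$(p)$. This follows from the static-adversary assumption articulated in Section~\ref{s:discussion} for the common-coin construction, together with the fact that distinct invocations of \textsc{CommonCoinFlip} produce independent bits; once that is pinned down, the stochastic-domination argument and the arithmetic $1 + 2 \cdot 2 = 5$ are immediate.
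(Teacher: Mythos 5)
Your proof is correct and follows essentially the same route as the paper's: both arguments bound the number of phases of the randomized binary consensus stage by a geometric random variable with parameter $1/2$ (expected value $2$) via Lemma~\ref{lemma:termination-probability}, then add two rounds per phase plus one round for the exchange stage to get $2\cdot 2 + 1 = 5$. Your explicit treatment of the conditional form of the lemma and the independence of successive common-coin flips is a slightly more careful rendering of a step the paper leaves implicit, but it is not a different argument.
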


\begin{proof}
Lemma \ref{lemma:termination-probability} implies that the probability of Weak-MVC's Randomized Binary Consensus Stage terminating at the end of phase $t$ is at least

\begin{equation}
\label{eq:bound}
(1-\frac{1}{2})^{t-1} \frac{1}{2}
\end{equation}

It follows that the number of phases until termination is upper bounded by a geometric random variable whose expected value is $1/2$. Therefore, the average number of phases of this stage is $2$.
Since each phase has two rounds and Weak-MVC also needs one round in the Exchange Stage, the average number of rounds is $2 \cdot 2 + 1 = 5$. 
\end{proof}

Equation (\ref{eq:bound}) implies that Weak-MVC terminates with probability 1. Since \name{} only performs communication in Weak-MVC, the average number of rounds for \name{} to complete a slot is 5, and each slot is completed with probability 1.

\begin{figure*}[t]
\begin{subfigure}{.24\textwidth}
  \centering
  \includegraphics[width=\textwidth]{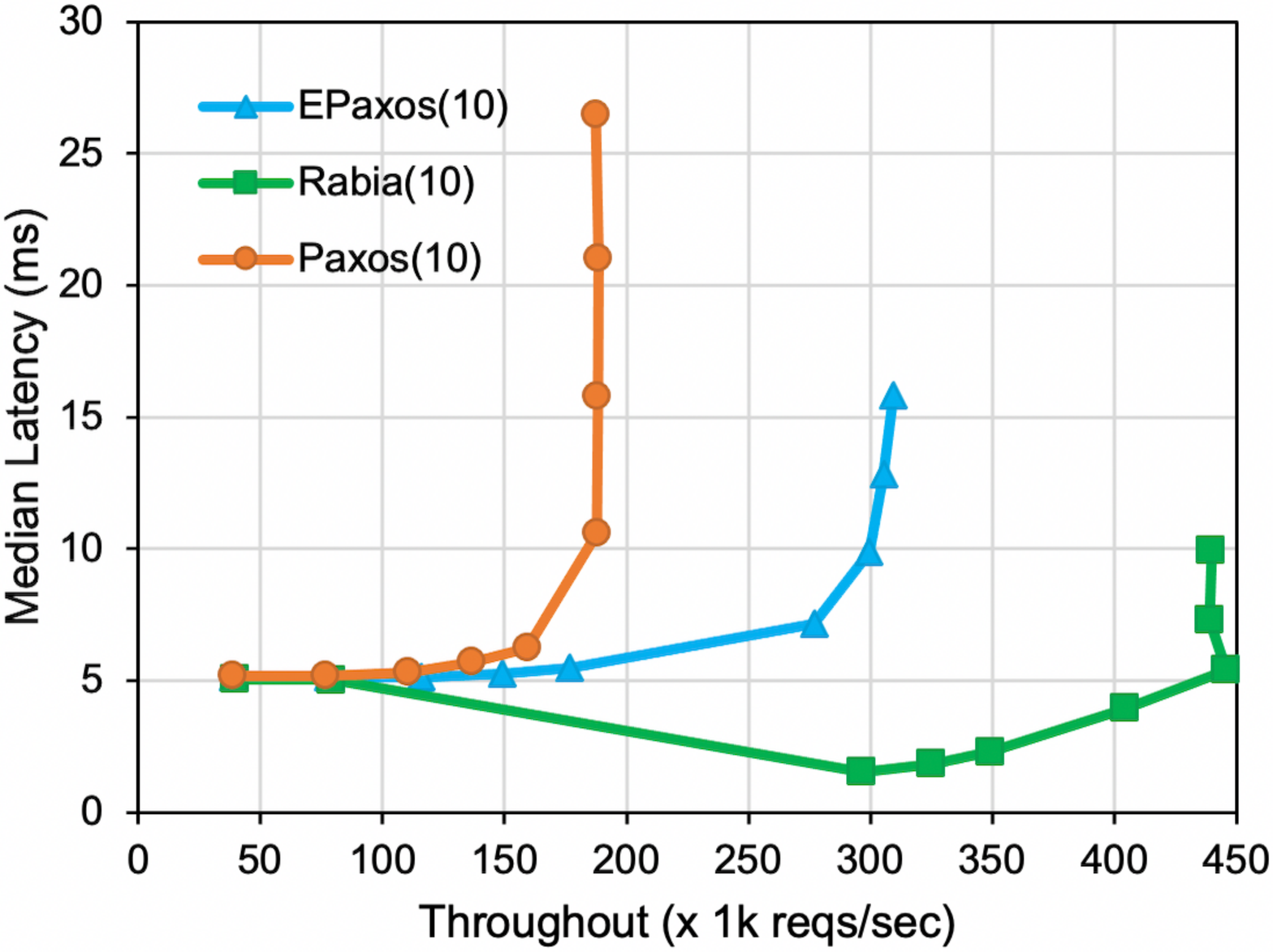}
  \caption{Median Latency. \\Same Zone with 3 Replicas.}
  \label{fig-sub:4a}
\end{subfigure}%
\begin{subfigure}{.24\textwidth}
  \centering
  \includegraphics[width=\textwidth]{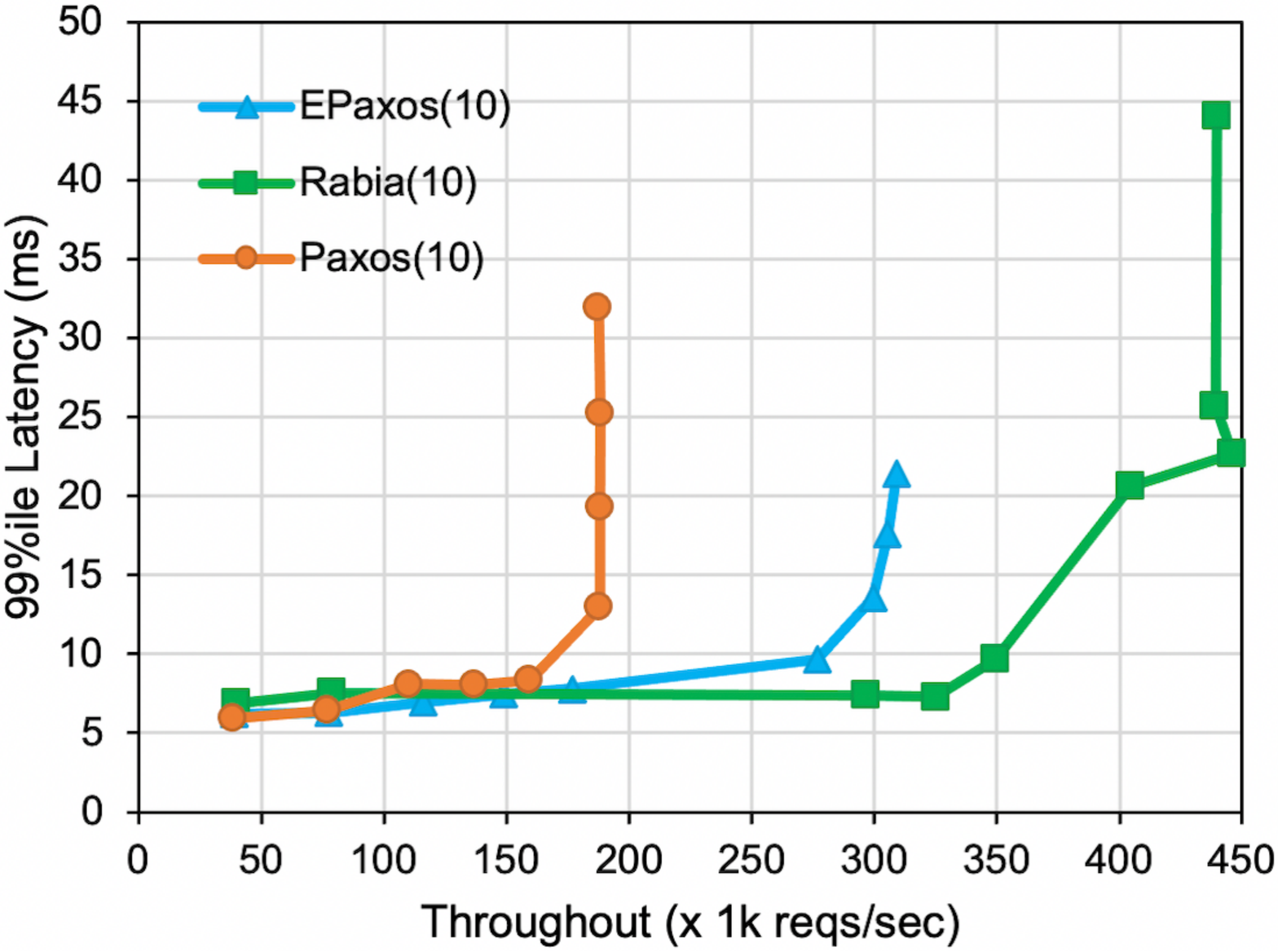}
  \caption{99th Percentile Latency. \\Same Zone with 3 Replicas.}
  \label{fig-sub:4b}
\end{subfigure}
\begin{subfigure}{.24\textwidth}
  \centering
  \includegraphics[width=\textwidth]{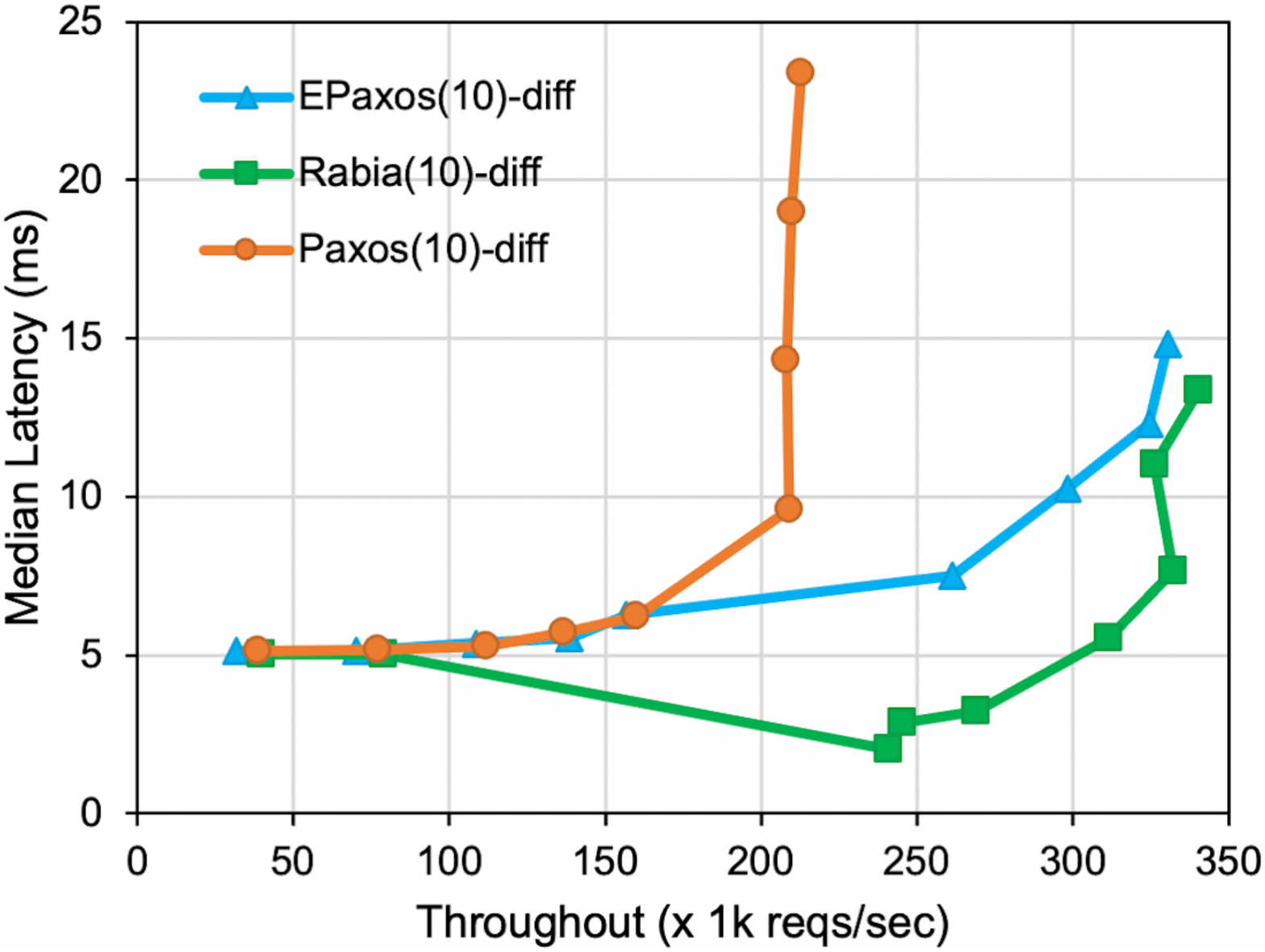}
  \caption{Median latency. \\Multi-Zones with 3 Replicas.}
  \label{fig-sub:4c}
\end{subfigure}
\begin{subfigure}{.24\textwidth}
  \centering
  \includegraphics[width=\textwidth]{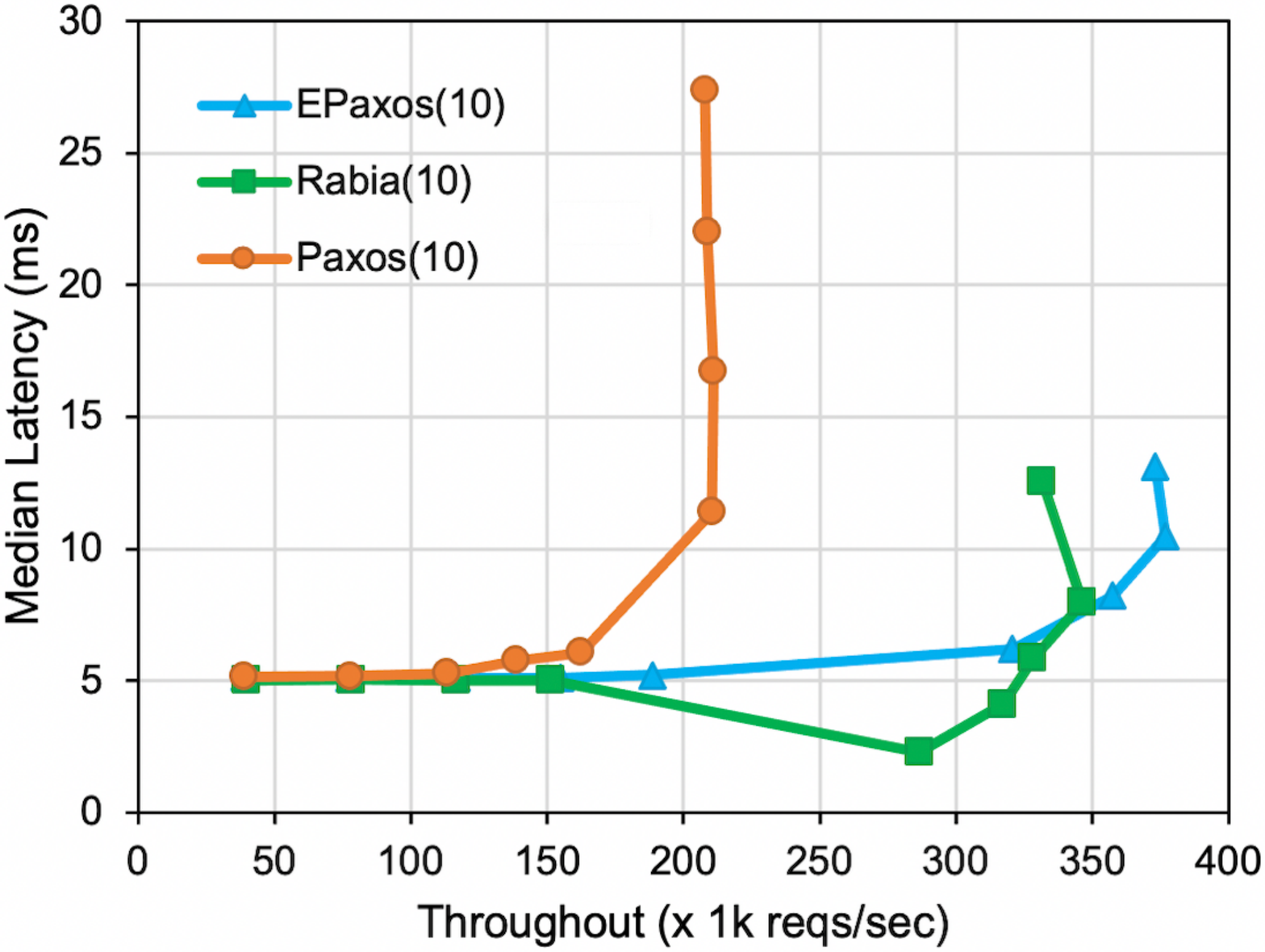}
  \caption{Median latency. \\Same Zone with 5 Replicas.}
  \label{fig-sub:4d}
\end{subfigure}%

%
\vspace{-10pt}
\caption{Throughput vs. Latency plots. The deployment consists of 3 or 5 4-CPU replica machines and 3 separate client machines. Both systems have client batch size $=10$, which is reported in brackets in the legend.}
\label{fig:throughput-latency}
\end{figure*}




\section{Performance Evaluation}
\label{s:evaluation}

We first evaluate \name{} by comparing it against two other systems: Multi-Paxos and EPaxos (with no conflicting requests) to understand the performance of the core consensus component. Then, we integrate \name{} with Redis (RedisRabia) and compare it with Redis. 
Multi-Paxos is a popular choice in production systems \cite{Chubby_OSDI06,Azure_SOSPI11,Paxos_live_PODC07}. EPaxos (with no conflict) is a state-of-the-art SMR system that achieves the best performance in our setting. It obtains high performance by achieving fast-path latency of two message delays, and each replica can have requests in a different order. 

The purpose of our evaluation is to understand the performance of the core algorithm with minimum optimization; hence, we choose Multi-Paxos and EPaxos that only use pipelining and batching. Highly optimized systems, e.g., Compartmentalized Paxos \cite{CompartmentalizedPaxos_VLDB21}, and systems that use specialized hardware like NOPaxos \cite{NOPaxos_OSDI16} could potentially achieve higher performance.

For EPaxos and Multi-Paxos, we use the implementation from \cite{EPaxos_SOSP13}. 
We implement a replicated key-value store that supports read (\textsc{Get}) and write (\textsc{Put}) operations based on \name. The structure follows closely to the one from \cite{EPaxos_SOSP13} so that we can perform a fair comparison. EPaxos uses its own GoBin library for serialization, which outperforms GoGo Protobuf,\footnote{\url{https://github.com/gogo/protobuf}} the library we used. However, we still choose GoGo Protobuf because it has native support to strings, which allows us to use Redis commands as inputs to \name{} directly. In the EPaxos evaluations, all requests are non-conflicting so that the achieved throughput is the maximum.

We evaluate our system on the Google Cloud Platform (GCP). 
Each server is on an e2-highmem-4 instance (Intel Xeon 2.8GHz, 4 vCPU, 32GB RAM) running Ubuntu-1604-xenial-v20201014. Clients run on one to three customized e2 machines, and each machine has 30 vCPUs and 120 GB RAM. All machines are deployed within a single availability zone us-east1-b (except for Figure \ref{fig-sub:4c}). Network bandwidth was measured in excess of 7.8 Gbits per second. The typical RTT is about 0.25 ms. 
The size of a client request is 16B (except for one experiment). We use these sizes because, for example, in Facebook's production TAO system \cite{Eiger_NSDI13,Facebook_TAO_ATC13},  50\% of requests have value field smaller than 16B. In~\cite{Facebook_Memcached_workload_sigmetrics12}, Facebook documents the workload of the Memcached deployment; in one of the systems, 40\% of requests is less than 11B.
In our setup, we observe no difference between different write ratios. For the data reported, the write ratio is 50\% for all three systems.

\vspace{3pt}
\noindent\textbf{Performance without Batching}:~
We first consider the case without any form of batching, i.e., both client batch and proxy batch are set to 1, and hence each slot contains one request. We test EPaxos and Paxos, with and without pipelining (labeled with (NP)). As described above, \name{} does not currently implement pipelining, 
and thus it processes one slot at a time. 
This closed-loop test consists of three replica machines and one client machine. We vary the load (number of clients) on each system until a maximum throughput is reached. The best performing number of  clients varies from two to six clients.\footnote{In this particular setting, EPaxos is  bottlenecked by dependency checking. The median latency of the check is 0.29 ms on the CPUs in GCP, which limits EPaxos's throughput. Hence, Paxos outperforms EPaxos in this evaluation.}

\begin{table}[ht]
\footnotesize
\begin{tabular}{|c|c|c|c|c|c|}
\hline
& \textbf{Rabia} & \textbf{EPaxos(NP)} & \textbf{EPaxos} & \textbf{Paxos(NP)}     & \textbf{Paxos }  \\ \hline
\textbf{Thpt}& 2458.56 & 2561.3 & 11480.1 & 1209.26  & 12993.07  \\ \hline
\textbf{M-Lat.}& 1.35 & 3.99 & 0.46  & 2.74 & 0.67  \\ \hline
\end{tabular}
\caption{\textbf{Performance without Batching.} (NP) indicates that a system has no pipelining. Throughput is represented as req/s, and median latency is measured in ms.}
\label{t:throughput}
\end{table}

\name{} is comparable with EPaxos (NP). This is because EPaxos spends more time in local computation, whereas \name{} has a higher communication overhead. 
EPaxos (NP) has a large median latency, compared to pipelined EPaxos, because it receives more messages before processing each slot, which increases the time for doing dependency check (that is proportional to the requests a command leader has seen). Naturally, the pipelined version outperforms \name{} by 5x. However, \name{} performs well without the pipelining optimization, considering the RTT is roughly 0.25ms. It is close to the maximum number of slots based on the theoretical analysis -- each slot takes 1.5 RTT on a fast path, which results into around 2667 slots per second. When batching is used, the effect of pipelining becomes less obvious, as serialization of client requests becomes the bottleneck. 

\vspace{3pt}
\noindent\textbf{Throughput vs. Latency}:~We next configure \name{}, EPaxos (with pipelining), and Paxos (with pipelining) to achieve its maximum throughput without reaching saturation and maintain a stable performance. Figure \ref{fig:throughput-latency} presents the throughput and latency numbers by increasing the number of concurrent closed-loop clients (20 -- 500). 
Interestingly, we found that an optimal configuration is different for each system. EPaxos and Multi-Paxos require proxy batching of size 1000 and 5000, respectively. \name{} runs best with small client batching and proxy batching. In this particular set of tests, \name{} uses proxy batching at 20. All systems use client batching size 10. 
The maximum batch size is 1000, 5000, and 300, for EPaxos, Paxos, and \name, respectively. These numbers are also used in \cite{EPaxos_SOSP13}. Following the best practice in \cite{EPaxos_SOSP13}, each system uses a timeout of 5ms to batch requests if the desired batch size is not reached.

Intuitively, different systems require different parameters to achieve the maximum throughput because each system has a different performance bottleneck. 
The reasons behind the choices of timeout and batch sizes, based on our experience, are:  (i) Paxos has a bottleneck at the leader, so it cannot initiate a slot too frequently; and (ii) \name{} does not implement pipelining, but distributes workload evenly at each replica; hence, it has to agree on one slot more quickly to obtain the maximum throughput. EPaxos falls in the middle. 

Figure \ref{fig-sub:4a} and Figure \ref{fig-sub:4b} show that \name{} has a small increase from the median latency to the 99th percentile latency when all three replicas are deployed in the same availability zone. However, if the system is overloaded, the 99th percentile latency becomes larger because of the increasing chances of NULL slots. Compared to EPaxos and Multi-Paxos, \name{} has a smaller drop in median latency because it uses a smaller proxy batch, and timeout is not triggered when there is a large enough number of clients.

Figure \ref{fig-sub:4c} shows that \name{} has a moderate degraded performance, around a 23\% drop, when deployed in multiple availability zones in the same region. The deployment has three replicas: one replica in us-east-1-b, one in us-east-1-c, and the other in us-east-1-d. All the clients are deployed in us-east-1-b. The average RTT across multiple zones increases from 0.25 to 0.4ms, and the variance becomes larger at 0.17ms. The limited drop in performance moving from one to multiple availability zones also shows that the network conditions that allow Rabia to reach consensus fast are practical, not stringent, and can be met even when Rabia is deployed beyond a single highly-connected network. Interestingly, EPaxos and Paxos have improved performance. This is because longer RTT actually reduces the communication burden on the leader node (in Paxos) and command-leader (in EPaxos).

Figure \ref{fig-sub:4d} presents the case with five replicas. Due to its $O(n^2)$ message complexity, \name{} has reduced throughput. The median latency remains small, compared to other systems.  EPaxos improves its performance, as also observed in \cite{EPaxos_SOSP13}, because when there are no conflicting requests, more requests can be handled concurrently with a larger number of replicas. Figures \ref{fig-sub:4c} and  \ref{fig-sub:4d} demonstrate that \name{} has high performance in the ideal case (same availability zone with $n=3$), but in other cases, the performance is still comparable to EPaxos.

\vspace{3pt}
\noindent\textbf{Varying Data Size}:~ Using the same configuration in Figure \ref{fig-sub:4a}, we also compare all three systems with key-value pairs with size 256B. EPaxos and Paxos suffer 71\% and 56\% reduction in throughput, whereas \name{} has less (47\%) reduction. Even though \name{} has a higher message complexity for each slot, EPaxos and Paxos rely on pipelining to obtain improved performance, which produces more messages in the same interval and therefore increases network utilization, reaching saturation faster.

\vspace{3pt}
\noindent\textbf{Integration with Redis}:~ In our integration, we use Redis to store the key-value pairs instead of a map stored in memory. RedisRabia utilizes Redis native $\textsc{MGet}$ and $\textsc{MPut}$ commands to process a batch of requests. 
Figure \ref{fig:redis} presents the comparison with native Redis-based systems: (i) synchronous-replication with one master and one replica (Sync-Rep (1)); (ii) synchronous-replication with one master and two replicas (Sync-Rep (2)); and (iii) RedisRaft, an experimental system by Redis Labs (Raft). We implement synchronous replication using Redis's $\textsc{WAIT}$ command on top of a native asynchronous replication in Redis (cluster mode). Note that data might be lost or become stale if the master fails in this type of replication. \name{} denotes a system without using Redis as the storage. In all the systems, we turned off any of the persistence options.

\begin{figure}[t]
    
    \includegraphics[width=.9\linewidth]{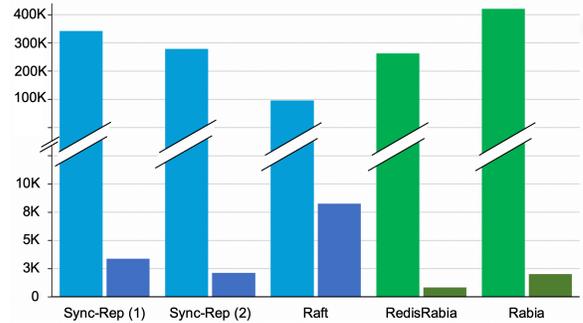}
    \vspace{-20pt}
    \caption{Throughput across different Redis integration. RedisRabia has around 1K req/s without batching, and Rabia has around 2K req/s without batching.}
     \label{fig:redis}
\end{figure}

Each system has two numbers, throughput with batching (left bar) and without batching (right bar). In all systems, we use 20 for client batching and 15 for proxy batching. The storage engine affects the performance of \name{} significantly because pipelining is not currently implemented, and any delay in completing a slot decreases throughput. Yet, RedisRabia is comparable with synchronous replication. The current implementation of RedisRaft is not optimizing throughput; hence, its throughput is sub-par except for the no-batching case due to its pipelining optimization. On the other hand, RedisRaft has better latency than all other systems, including \name-based ones. The reason for it 
is mainly the effect of pipelining, as also observed in Table \ref{t:throughput}.

\vspace{3pt}
\noindent\textbf{Internal Statistics and Network Stability Test}:~
The performance of \name{} depends on the underlying network, which affects two important internal statistics: (i) the message delays needed for Weak-MVC to terminate; and (ii) the percentage of NULL slots in the log. Appendix \ref{app:stat} presents a more thorough statistics in our evaluation. In summary, the maximum number of message delays we have observed is 15, which occurred 26 times out of 0.12 billion rounds, 
including the experiments that overload \name. ~96\% of slots use the fast path. For all the closed-loop tests, 2.22\% of slots are NULL, whereas for all the open-loop tests, 0.31\% of slots are NULL. We also conduct a simple network test in GCP and CloudLab \cite{CloudLab_ATC19} to measure how many messages a replica needs to receive
for all three replicas to have the same 
oldest message in an interval. This scenario ensures that all replicas have the same head in their respective priority queue ($PQ$). On average, replicas need to receive 3.1 to 3.9 messages. These results, along with the reported internal statistics, confirm that the network conditions of GCP have the stability needed to enable high performance in \name.

\section{Related Work}
\label{s:related}

\noindent\textbf{SMR}:~
State-machine replication is a popular mechanism for making a service fault-tolerant and highly available \cite{SMR_Schneider_ACM90}. 
Classical log-based SMR systems use a single leader to order requests, e.g., \cite{lamport1998part,lamport2001paxos,Raft_ATC14,Zab_DSN11,Mu_Aguilera_OSDI20}. The capacity of a single node limits these systems. Recent SMR systems adopt a multi-leader (or so-called leaderless) design that distributes the ordering responsibility evenly to increase performance (e.g., Mencius \cite{Mencius_Marzullo_OSDI08}, EPaxos \cite{EPaxos_SOSP13}, M$^2$Paxos \cite{M2Paxos_DSN16}, Caesar \cite{Caesar_DSN17}, and \textsc{Atlas} \cite{Atlas_Sutra_Eurosys20}). At the core of the multi-leader design \cite{Leaderless_theory_Sutra_DISC20}, there is a dependency graph to keep track of the dependency between requests so that any replica can decide the order of requests fast. As opposed to \name, both leader-based and multi-leader-based SMR systems require non-trivial effort \cite{Paxos_live_PODC07,Raft_CockroachDB16,Chubby_OSDI06} to integrate the core consensus algorithms with an auxiliary fail-over mechanism to recover from the failure of a leader or a command leader. Moreover, in the case of multi-leader-based SMRs, replicas also need to spend more computation in checking dependencies. These systems did not use randomized consensus.

\vspace{8pt}
\noindent\textbf{Randomized Consensus Algorithms}:~ Following Ben-Or's work, several theoretical papers \cite{Raynal_ISORC01,Chen_IPL09,Raynal_SRDS04,Aspnes_RandomConsensus_Survey_DC2003,ByzRandConsensus_Raynal_PODC14,Fault-tolerantBook_Raynal_Book18,RandByzConsens_Ittai_FC19,Rabin_FOCS83_randomizedByz,Raynal_TDSC05_randomizedByz} have improved it in multiple aspects, including tolerating Byzantine faults \cite{ByzRandConsensus_Raynal_PODC14,RandByzConsens_Ittai_FC19}, reducing the average number of message delays using a common coin \cite{Raynal_ISORC01,Fault-tolerantBook_Raynal_Book18,RandByzConsens_Ittai_FC19,Rabin_FOCS83_randomizedByz}, and taking multi-valued inputs \cite{Chen_IPL09}. Ben-Or's algorithm has been previously formally verified in several systems~\cite{MaricSB17, BertrandKLW19}. Pedone et al. \cite{Pedone_EDCC02_Weak-ordering} investigated two relaxed randomized consensus algorithms based on Ben-Or's algorithm \cite{Ben-Or_PODC83} and Rabin's algorithm \cite{Rabin_FOCS83_randomizedByz}, and also exploited the weakly ordering guarantees from the network layer to improve performance.  Crain \cite{Tyler_ByzRand_2020} presented an evaluation of his binary Byzantine consensus algorithm. The implementation only takes binary input and generates a binary output. These works focused on the single-shot consensus, and did not present an approach to develop a practical SMR system, as Rabia does.



\vspace{8pt}
\noindent\textbf{Practical Randomized Byzantine Fault-tolerance Systems}:~Randomization has been recently used in Byzantine Fault-tolerance (BFT) systems. Cachin et al. proposed a coin-flipping protocol based on the Diffie-Hellman problem, which is then used to build a Byzantine consensus protocol with constant message delays in expectation and has message and communication complexities close to the optimum \cite{Cachin_JCrypto05_randomBFT}. Cachin et al. \cite{Cachin_OPODIS16_randomBFT} introduced a protocol for cryptographically secure  randomness generation that allows users to deploy non-deterministic applications on top of the proposed BFT systems. Miller et al. presented HoneyBadgerBFT \cite{HoneyBadger_CCS16}, which also identified that randomization can be used to improve performance in SMR systems. The key technique is a randomized atomic broadcast protocol that tightly integrates random selection and encryption.

Another popular application of randomized consensus is Blockchain. For example, the Proof-of-Work of Bitcoin \cite{Bitcoin_Nakamoto2009}  and Proof-of-Stake of Algorand \cite{Algorand_Micali_TCS2019,Algorand_Micali_SOSP2017} can both be viewed as randomized consensus algorithms. These BFT systems focus on cryptographical guarantees \cite{Cachin_JCrypto05_randomBFT,Cachin_OPODIS16_randomBFT}, permissionless settings \cite{Bitcoin_Nakamoto2009,Algorand_Micali_TCS2019,Algorand_Micali_SOSP2017}, or wide-area networks \cite{Bitcoin_Nakamoto2009,Algorand_Micali_TCS2019,Algorand_Micali_SOSP2017,HoneyBadger_CCS16}. These aspects are significantly different from our target scenario.  Therefore, their techniques, including the ways that randomization is used, are different from \name{}'s. Specifically, these systems do not use relaxed consensus (with weak validity) for improving performance. 

\section{Conclusion}
\label{s:summary}

We present the design and implementation of \name{}, a simple SMR system that achieves high performance in favorable settings (e.g., same availability zone with $n=3$) and comparable performance in more general cases within a single datacenter. By using randomized consensus as opposed to a deterministic one, \name{} does not need any fail-over protocol and supports trivial log compaction. Our evaluation study confirms that in commodity networks, the belief that randomized consensus implementations do not provide competitive performance can be challenged.

\section*{Acknowledgment}
The authors thank our shepherd Manos Kapritsos and all anonymous reviewers and artifact reviewers for their important comments. This material is based upon work supported by the National Science Foundation under Grant No. CNS-1816487 and CNS-2045976. Early evaluation and artifact evaluation of this work were obtained using the CloudLab testbed \cite{CloudLab_ATC19} supported by the National Science Foundation. The authors would also like to thank Matthew Abbene and Andrew Chapman's help on Redis integration, and Rachel Trickett's help on artifact evaluation.

\bibliographystyle{plain}
\bibliography{references,DS_system,n-str,n,n-conf} 
\appendix
\noindent {\Large \textbf{Appendix}}

\section{No-op in Paxos}
\label{app:paxos-noop}

Paxos and variants have a concept similar to our usage of NULL values, namely no-op command. For example, in Multi-Paxos, a proposer/leader might not know the agreed value of a certain set of slots. Such a scenario is common for a newly elected leader. In this case, the proposer sends a no-op to acceptors to learn value(s) that have been previously proposed by other proposers and accepted by a quorum. This allows the new proposer to fill in any missing slots (holes in the log) in the sequence so that existing requests can be executed. In other words, in Multi-Paxos, each slot still contains a request. In contrast, a slot might contain a NULL value (no request) in Rabia.

\section{Local Computation Time in EPaxos}
\label{app:EPaxos-local-time}

Table \ref{t:EPaxos-time} presents the median function computation time in EPaxos (no conflict requests). We measure the time for each function in a closed-loop with 100 clients and varying batch sizes. We only report functions beyond 0.1ms. The bottom row presents the total local computation time a replica takes to commit a slot.

\begin{table}[ht]
\centering
\footnotesize
\begin{tabular}{l | l | l | l}
\textbf{Function} & \textbf{Batch = 80} & \textbf{Batch = 10} & \textbf{Batch = 1}\\\hline
handlePropose & 0.42 & 0.2 & 0.06 \\
handlePreAcceptReply & 0.42 & 0.19 & 0.06 \\
handlePreAcceptOK & 0. 44& 0.57 & 0.11\\
handleAcceptReply & 0.42 & 0.11 & 0.04\\\hline
total computation time & 1.80 & 1.12 & 0.29
\end{tabular}
\caption{Median EPaxos function computation time (ms).}
\label{t:EPaxos-time}
\end{table}

\section{Practical Tail Latency Reduction}
\label{app:tail-latency-reduction}

We describe two practical approaches to reduce tail latency further.

\begin{itemize}[nosep]
\item\textit{Failure detector}: As the example in Section \ref{s:weak-MVC} demonstrates, the cause of long tail latency is when replicas take a different branch of the if statements in Line 23 of Algorithm \ref{algo:Weak-MVC}. We can use an eventually correct failure detector \cite{phi_failureDetector_SRDS04} to avoid such a scenario. If an instance of Weak-MVC takes unusually long, then replicas will wait for $\stateRabia$ and $\vote$ messages from all the live replicas indicated by the failure detector. This allows live replicas to receive a consistent set of messages, which will enable replicas to take the fast path and reach an agreement in that phase.
\item\textit{Freeze time}: One major reason that replicas do not use the fast path is that they use different proposals as input to the Weak-MVC. We can avoid this case by choosing the proposal more carefully. At Line 2 of Algorithm \ref{algo:SMR-server}, if the $proposal_i$ has a timestamp that is too close to the current time, then replica $N_i$ should wait for a small amount of time, called \textit{freeze time}, and check $PQ_i$ again to see if there is a request that is older than the current $proposal_i$. If so, then $N_i$ switches the proposal; otherwise, a small freeze time would also give other replicas more time to receive $proposal_i$. Consequently, with waiting, replicas will likely to have more similar $PQ$'s; hence, freeze time increases the chance of hitting the fast path and hence, reduces the probability of a long latency.
\end{itemize}

\section{Service Availability under Failures}
\label{app:failure}

\begin{figure}[ht]
    \includegraphics[width=.9\linewidth]{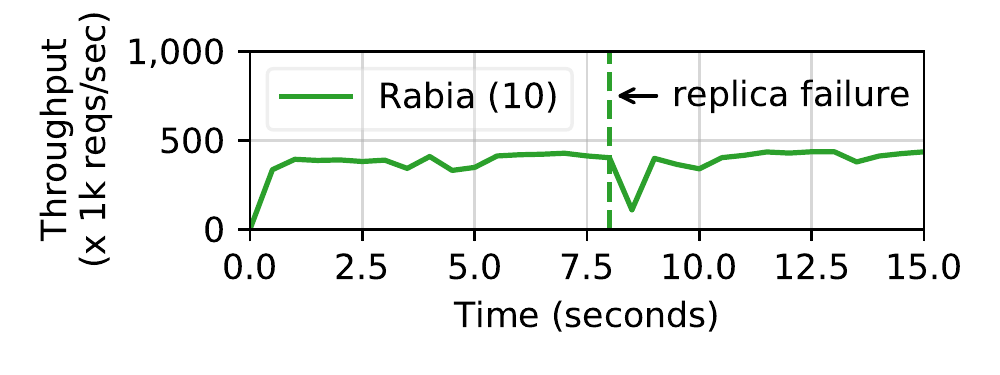}
    
    \caption{Throughput with one crashed replica. The lowest throughput is around 101K reqs/second.}
     \label{fig:recover}
\end{figure}

Figure \ref{fig:recover} shows the evolution of throughput in a 3-replica system with 60 closed-loop clients. The system experiences a single crashed replica in the execution. The throughput drops because clients use a timeout-based mechanism to detect a replica failure and then switch to another replica as described in Section \ref{s:discussion}. The throughput recovers and stabilizes after all the clients complete the switch to the live replicas.

\section{Internal Statistics and Network Stability Test}
\label{app:stat}
Two important numbers affect the performance of \name{}: (i) the message delays needed for Weak-MVC to terminate; and (ii) the percentage of NULL values in the log. Table \ref{t:delay} presents the percentage of message delays for all the evaluations that we have done. The maximum number of message delays we have observed is 15, which occurred 5 times throughout our evaluation. For all the closed-loop tests, 2.22\% of slots are NULL, whereas for all the open-loop tests, 0.31\% of slots are NULL.  For all the closed-loop tests deployed across multiple zones, 2.09\% of slots are NULL.

\begin{table}[ht]
\footnotesize
\begin{tabular}{|c|c|c|c|c|}
\hline
                              \textit{Message delays}            & $3$       & $5$      & $7$      & $9$ -- $15$  \\ \hline
Open-loop                                                              & 99.58\% & 0.37\% & 0.04\% & 0.01\% \\ \hline
Closed-loop (all)                                                      & 96.81\% & 2.76\% & 0.39\% & 0.04\% \\ \hline
\begin{tabular}[c]{@{}c@{}}Closed-loop \\ (same zone)\end{tabular}     & 96.90\% & 2.78\% & 0.28\% & 0.04\% \\ \hline
\begin{tabular}[c]{@{}c@{}}Closed-loop\\ (three zones)\end{tabular} & 96.77\% & 2.75\% & 0.44\% & 0.04\% \\ \hline
\end{tabular}
\caption{Message Delays of Weak-MVC}
\label{t:delay}
\end{table}

\vspace{3pt}
\noindent\textbf{Network Stability Test}:~ In order to empirically assess the network stability in our experiments, we devise a test to quantify whether replicas can receive a consistent set of messages in a similar setup. In our test, three senders send messages concurrently to three receivers every 0.3ms (roughly an RTT in GCP). We measure how many consecutive messages the three receivers need to receive in order to collect all three messages sent by the senders in the same time interval. We collect this number multiple times during multiple days on GCP. Removing few outliers, the mean value fluctuates between 3.1 to 3.9, and the 95th percentile constantly stays around 5. 
These results, along with the reported internal statistics, confirm that the network condition of GCP satisfies our stability desire to enable high performance in \name{}.



\end{document}